\title{Algorithmic information, plane Kakeya sets, and conditional dimension}
\author{
	Jack H. Lutz\footnote{Research supported in part by National Science Foundation Grants 1247051 and 1545028.}\\
	Department of Computer Science, Iowa State University\\
	Ames, IA 50011, USA\\
	\texttt{lutz@cs.iastate.edu}
	\and
	Neil Lutz\footnote{This work was conducted at DIMACS and at Hebrew University. It was partially enabled through support from the National Science Foundation under grants CCF-1445755 and CCF-1101690.}\\
	Department of Computer Science, Rutgers University\\
	Piscataway, NJ 08854, USA\\
	\texttt{njlutz@rutgers.edu}}
\setlist{noitemsep}
\newcommand\numberthis{\addtocounter{equation}{1}\tag{\theequation}}
\numberwithin{equation}{section}
\theoremstyle{plain}
\newtheorem{thm}{Theorem}
\newtheorem{obs}[thm]{Observation}
\newtheorem{lem}[thm]{Lemma}
\theoremstyle{remark}
\newtheorem{cla}{Claim}
\DeclareMathOperator{\Dim}{Dim}
\DeclareMathOperator{\mdim}{mdim}
\DeclareMathOperator{\Mdim}{Mdim}
\newcommand{\R}{\mathbb{R}}
\newcommand{\Z}{\mathbb{Z}}
\newcommand{\N}{\mathbb{N}}
\newcommand{\Q}{\mathbb{Q}}
\newcommand{\ve}{\varepsilon}
\newenvironment{proofof}[1]{\begin{trivlist}
		\item[\hskip \labelsep \textit{Proof of #1.}]}{\end{trivlist}}
\begin{document}
		\maketitle
		\thispagestyle{empty}
		\begin{abstract}
			We formulate the \emph{conditional Kolmogorov complexity} of $x$ \emph{given} $y$ at \emph{precision} $r$, where $x$ and $y$ are points in Euclidean spaces and $r$ is a natural number.  We demonstrate the utility of this notion in two ways.
			\begin{enumerate}
				\item
					We prove a \emph{point-to-set principle} that enables one to use the (relativized, constructive) dimension of a \emph{single point} in a set $E$ in a Euclidean space to establish a lower bound on the (classical) Hausdorff dimension of $E$.  We then use this principle, together with conditional Kolmogorov complexity in Euclidean spaces, to give a new proof of the known, two-dimensional case of the Kakeya conjecture.  This theorem of geometric measure theory, proved by Davies in 1971, says that every plane set containing a unit line segment in every direction has Hausdorff dimension $2$.
				\item
					We use conditional Kolmogorov complexity in Euclidean spaces to develop the \emph{lower} and \emph{upper conditional dimensions} $\dim(x|y)$ and $\Dim(x|y)$ of $x$ given $y$, where $x$ and $y$ are points in Euclidean spaces.  Intuitively these are the lower and upper asymptotic algorithmic information densities of $x$ conditioned on the information in $y$.  We prove that these conditional dimensions are robust and that they have the correct information-theoretic relationships with the well-studied dimensions $\dim(x)$ and $\Dim(x)$ and the mutual dimensions $\mdim(x:y)$ and $\Mdim(x:y)$.
			\end{enumerate}
		\end{abstract}
		\clearpage
		\setcounter{page}{1}
		\section{Introduction}
		
		This paper concerns the fine-scale geometry of algorithmic information in Euclidean spaces. It shows how new ideas in algorithmic information theory can shed new light on old problems in geometric measure theory. This introduction explains these new ideas, a general principle for applying these ideas to classical problems, and an example of such an application. It also describes a newer concept in algorithmic information theory that arises naturally from this work.
		
		Roughly fifteen years after the mid-twentieth century development of the \emph{Shannon information theory} of probability spaces~\cite{Shan48}, Kolmogorov recognized that Turing's mathematical theory of computation could be used to refine the Shannon theory to enable the amount of information in individual data objects to be quantified~\cite{Kolm65}.  The resulting theory of \emph{Kolmogorov complexity}, or \emph{algorithmic information theory}, is now a large enterprise with many applications in computer science, mathematics, and other sciences~\cite{LiVit08}.  Kolmogorov proved the first version of the fundamental relationship between the Shannon and algorithmic theories of information in~\cite{Kolm65}, and this relationship was made exquisitely precise by Levin's coding theorem~\cite{Levi73,Levi74}.  (Solomonoff and Chaitin independently developed Kolmogorov complexity at around the same time as Kolmogorov with somewhat different motivations~\cite{Solo64,Chai66,Chai69}.)
		
		At the turn of the present century, the first author recognized that Hausdorff's 1919 theory of fractal dimension~\cite{Haus19} is an older theory of information that can also be refined using Turing's mathematical theory of computation, thereby enabling the \emph{density} of information in individual infinite data objects, such as infinite binary sequences or points in Euclidean spaces, to be quantified~\cite{Lutz03a,Lutz03b}.  The resulting theory of \emph{effective fractal dimensions} is now an active enterprise with a growing array of applications~\cite{DowHir10}.  The paper~\cite{Lutz03b} proved a relationship between effective fractal dimensions and Kolmogorov complexity that is as precise as --- and uses --- Levin's coding theorem.

		Most of the work on effective fractal dimensions to date has concerned the \emph{(constructive) dimension} $\dim(x)$ and the dual \emph{strong (constructive) dimension} $\Dim(x)$~\cite{AHLM07} of an infinite data object $x$, which for purposes of the present paper is a point in a Euclidean space $\R^n$ for some positive integer $n$.\footnote{These constructive dimensions are $\Sigma^0_1$ effectivizations of Hausdorff and packing dimensions~\cite{Falc14}. Other effectivizations, e.g., computable dimensions, polynomial time dimensions, and finite-state dimensions, have been investigated, but only the constructive dimensions are discussed here.}
		The inequalities 
		\[0\leq\dim(x)\leq \Dim(x)\leq n\]
		hold generally, with, for example,  $\Dim(x) = 0$ for points $x$ that are computable and $\dim(x) = n$ for points that are algorithmically random in the sense of Martin-L\"{o}f [Mart66].
		
		How can the dimensions of individual points---dimensions that are defined using the theory of computing---have any bearing on classical problems of geometric measure theory? The problems that we have in mind here are problems in which one seeks to establish lower bounds on the classical Hausdorff dimensions $\dim_H(E)$ (or other fractal dimensions) of sets $E$ in Euclidean spaces. Such problems involve global properties of sets and make no mention of algorithms.
		
		The key to bridging this gap is relativization. Specifically, we prove here a \emph{point-to-set principle} saying that, in order to prove a lower bound $\dim_H(E)\geq \alpha$, it 
		suffices to show that, for every $A \subseteq \N$ and every $\ve>0$, there is a point $x \in E$ such that $\dim^A(x)\geq\alpha-\ve$, where $\dim^A(x)$ is the dimension of $x$ relative to the oracle $A$. We also prove the analogous point-to-set principle for the classical packing dimension $\dim_P(E)$ and the relativized strong dimension $\Dim^A(x)$.
		
		We illustrate the power of the point-to-set principle by using it to give a new proof of a known theorem in geometric measure theory. A Kakeya set in a Euclidean space $\R^n$ is a set $K \subseteq \R^n$ that contains a unit line segment in every direction. Besicovitch~\cite{Besi19,Besi28b} proved that Kakeya sets can have Lebesgue measure $0$ and asked whether Kakeya sets in the Euclidean plane can have dimension less than $2$~\cite{Davi71}. The famous Kakeya conjecture asserts a negative answer to this and to the analogous question in higher dimensions, i.e., states that every Kakeya set in a Euclidean space $\R^n$ has Hausdorff dimension $n$.\footnote{Statements of the Kakeya conjecture vary in the literature. For example, the set is sometimes required to be compact or Borel, and the dimension used may be Minkowski instead of Hausdorff. Since the Hausdorff dimension of a set is never greater than its Minkowski dimension, our formulation is at least as strong as those variations.} This conjecture holds trivially for $n=1$ and was proven by Davies~\cite{Davi71} for $n=2$. A version of the conjecture in finite fields has been proven by Dvir~\cite{Dvir09}. For Euclidean spaces of dimension $n\geq 3$, it is an important open problem with deep connections to other problems in analysis~\cite{Wolf99,Tao00}.
		
		In this paper we use our point-to-set principle to give a new proof of Davies's theorem. This proof does not resemble the classical proof, which is not difficult but relies on Marstrand's projection theorem~\cite{Mars54} and point-line duality. Instead of analyzing the set $K$ globally, our proof focuses on the information content of a single, judiciously chosen point in $K$. Given a Kakeya set $K\subseteq\R^2$ and an oracle $A\subseteq\N$, we first choose a particular line segment $L\subseteq K$ and a particular point $(x,mx+b)\in L$, where $y=mx+b$ is the equation of the line containing $L$.\footnote{One might na\"{i}vely expect that for independently random $m$ and $x$, the point $(x,mx+b)$ must be random. In fact, in every direction there is a line that contains no random point~\cite{LutLut15a}.} We then show that $\dim^A(x,mx+b)\geq2$. By our point-to-set principle this implies that $\dim_H(K)\geq2$.
		
		Our proof that $\dim^A(x,mx+b)\geq2$ requires us to formulate a concept of conditional Kolmogorov complexity in Euclidean spaces. Specifically, for points $x\in\R^m$ and $y\in\R^n$ and natural numbers $r$, we develop the \emph{conditional Kolmogorov complexity} $K_r(x|y)$ of $x$ \emph{given} $y$ at \emph{precision} $r$. This is a ``conditional version'' of the Kolmogorov complexity $K_r(x)$ of $x$ at precision $r$ that has been used in several recent papers (e.g., \cite{LutMay08,CasLut15,GLMM14}).
		
		In addition to enabling our new proof of Davies's theorem, conditional Kolmogorov complexity in Euclidean spaces enables us to fill a gap in effective dimension theory. The fundamental quantities in Shannon information theory are the \emph{entropy} (information content) $H(X)$ of a probability space $X$, the \emph{conditional entropy} $H(X|Y)$ of a probability space $X$ given a probability space $Y$, and the \emph{mutual information} (shared information) $I(X;Y)$ between two probability spaces $X$ and $Y$~\cite{CovTho06}.  The analogous quantities in Kolmogorov complexity theory are the \emph{Kolmogorov complexity} $K(u)$ of a finite data object $u$, the \emph{conditional Kolmogorov complexity} $K(u|v)$ of a finite data object $u$ given a finite data object $v$, and the \emph{algorithmic mutual information} $I(u:v)$ between two finite data objects $u$ and $v$~\cite{LiVit08}. The above-described dimensions $\dim(x)$ and $\Dim(x)$ of a point $x$ in Euclidean space (or an infinite sequence $x$ over a finite alphabet)
		are analogous by limit theorems~\cite{Mayo02,AHLM07} to $K(u)$ and hence to $H(X)$. Case and the first author have recently developed and investigated the \emph{mutual dimension} $\mdim(x:y)$ and the dual \emph{strong mutual dimension} $\Mdim(x:y)$, which are densities of the algorithmic information shared by points $x$ and $y$ in Euclidean spaces~\cite{CasLut15} or sequences $x$ and $y$ over a finite alphabet~\cite{CasLut15a}.  These mutual dimensions are analogous to $I(u:v)$ and $I(X;Y)$.
				
		What is conspicuously missing from the above account is a notion of conditional dimension. In this paper we remedy this by using conditional Kolmogorov complexity in Euclidean space to develop the \emph{conditional dimension} $\dim(x|y)$ of $x$ given $y$ and its dual, the \emph{conditional strong dimension} $\Dim(x|y)$ of $x$ given $y$, where $x$ and $y$ are points in Euclidean spaces. We prove that these conditional dimensions are well behaved and that they have the correct information theoretic relationships with the previously defined dimensions and mutual dimensions. The original plan of our proof of Davies's theorem used conditional dimensions, and we developed their basic theory to that end.  Our final proof of Davies's theorem does not use them, but conditional dimensions (like the conditional entropy and conditional Kolmogorov complexity that motivate them) are very likely to be useful in future investigations.
		
		The rest of this paper is organized as follows. Section~\ref{sec:dpes} briefly reviews the dimensions of points in Euclidean spaces. Section~\ref{sec:ps} presents the point-to-set principles that enable us to use dimensions of individual points to prove lower bounds on classical fractal dimensions. Section~\ref{sec:ckces} develops conditional Kolmogorov complexity in Euclidean spaces. Section~\ref{sec:ksp} uses the preceding two sections to give our new proof of Davies's theorem. Section~\ref{sec:cdes} uses Section~\ref{sec:ckces} to develop conditional dimensions in Euclidean spaces. Most proofs are deferred to the optional technical appendix.
		
		\section{Dimensions of Points in Euclidean Spaces}\label{sec:dpes}
		
		This section reviews the constructive notions of dimension and mutual dimension in Euclidean spaces. The presentation here is in terms of Kolmogorov complexity. Briefly, the \emph{conditional Kolmogorov complexity} $K(w|v)$ of a string $w \in \{0,1\}^*$ \emph{given} a string $v\in\{0,1\}^*$ is the minimum length $|\pi|$ of a binary string $\pi$ for which $U(\pi,v) = w$, where $U$ is a fixed universal self-delimiting Turing machine. The \emph{Kolmogorov complexity} of $w$ is $K(w|\lambda)$, where $\lambda$ is the empty string. We write $U(\pi)$ for $U(\pi,\lambda)$. When $U(\pi) = w$, the string $\pi$ is called a \emph{program} for $w$. The quantity $K(w)$ is also called the \emph{algorithmic information content} of $w$. Routine coding extends this definition from $\{0,1\}^*$ to other discrete domains, so that the Kolmogorov complexities of natural numbers, rational numbers, tuples of these, etc., are well defined up to additive constants. Detailed discussions of self-delimiting Turing machines and Kolmogorov complexity appear in the books~\cite{LiVit08,Nies09,DowHir10} and many papers.
		
		The definition of $K(q)$ for rational points $q$ in Euclidean space is lifted in two steps to define the dimensions of arbitrary points in Euclidean space. First, for $x \in \R^n$ and $r \in \N$, the \emph{Kolmogorov complexity} of $x$ at \emph{precision} $r$ is
		\begin{equation}\label{eq:Kr}
		K_r(x)=\min\{K(q)\,:\,q\in\Q^n\cap B_{2^{-r}}(x)\}\,,
		\end{equation}
		where $B_{2^{-r}}(x)$ is the open ball with radius $2^{-r}$ and center $x$. Second, for $x \in \R^n$, the \emph{dimension} and \emph{strong dimension} of $x$ are
		\begin{equation}\label{eq:dimDim}
		\dim(x)=\liminf_{r\to\infty}\frac{K_r(x)}{r}\qquad\textrm{and}\qquad\Dim(x)=\limsup_{r\to\infty}\frac{K_r(x)}{r}\,,
		\end{equation}
		respectively.\footnote{We note that $K_r(x)=K(x\upharpoonright r)+o(r)$, where $x\upharpoonright r$ is the binary expansion of $x$, truncated $r$ bits to the right of the binary point. However, it has been known since Turing's famous correction~\cite{Turi37} that binary notation is not a suitable representation for the arguments and values of computable functions on the reals. (See also~\cite{Weih00}.) Hence, in order to make our definitions useful for further work in computable analysis, we formulate complexities and dimensions in terms of rational approximations, both here and later.}
		
		Intuitively, $\dim(x)$ and $\Dim(x)$ are the lower and upper asymptotic densities of the algorithmic information in $x$. These quantities were first defined in Cantor spaces using betting strategies called gales and shown to be constructive versions of classical Hausdorff and packing dimension, respectively~\cite{Lutz03b,AHLM07}. These definitions were explicitly extended to Euclidean spaces in~\cite{LutMay08}, where the identities~(\ref{eq:dimDim}) were \emph{proven as a theorem}. Here it is convenient to use these identities as definitions. For $x \in \R^n$, it is easy to see that
		\[0\leq\dim(x)\leq\Dim(x)\leq n\,,\]
		and it is known that, for any two reals $0\leq\alpha\leq\beta\leq n$, there exist uncountably many 	points $x \in \R^n$ satisfying $\dim(x)=\alpha$ and $\Dim(x)=\beta$~\cite{AHLM07}. Applications of these dimensions in Euclidean spaces appear in~\cite{LutMay08,GuLuMa06,LutWei08,DLMT14,GLMM14}.
		\section{From Points to Sets}\label{sec:ps}
		The central message of this paper is a useful \emph{point-to-set principle} by which the existence of a single high-dimensional point in a set $E\subseteq\R^n$ implies that the set $E$ has high dimension.
		
		To formulate this principle we use relativization. All the algorithmic information concepts in Sections~\ref{sec:dpes} and~\ref{sec:cdes} above can be relativized to an arbitrary oracle $A\subseteq\N$ by giving the Turing machine in their definitions oracle access to $A$. Relativized Kolmogorov complexity $K_r^A(x)$ and relativized dimensions $\dim^A(x)$ and $\Dim^A(x)$ are thus well defined. Moreover, the results of Section~\ref{sec:dpes} hold relative to any oracle $A$.
	
		We first establish the point-to-set principle for Hausdorff dimension. Let $E\subseteq\R^n$. For $\delta>0$, define $\mathcal{U}_\delta(E)$ to be the collection of all countable covers of $E$ by sets of positive diameter at most $\delta$. That is, for every cover $\{U_i\}_{i\in\N}\in\mathcal{U}_\delta(E)$, we have $E\subseteq\bigcup_{i\in\N}U_i$ and $|U_i|\in(0,\delta]$ for all $i\in\N$, where for $X\in\R^n$, $|X|=\sup_{p,q\in X}|p-q|$. 		For $s\geq0$, define
		\[H_\delta^s(E)=\inf\bigg\{\sum_{i\in\N}\left|U_i\right|^s\,:\,\{U_i\}_{i\in\N}\in\mathcal{U}_\delta(E)\bigg\}\,.\]
		Then the \emph{$s$-dimensional Hausdorff outer measure of $E$} is
		\[H^s(E)=\lim_{\delta\to 0^+}H_\delta^s(E)\,,\]
		and the \emph{Hausdorff dimension of $E$} is
		\[\dim_H(E)=\inf\left\{s>0:H^s(E)=0\right\}\,.\] More details may be found in standard texts, e.g.,~\cite{SteSha05,Falc14}.
		
		\begin{thm}\label{thm:hausdorff}
			\textup{(Point-to-set principle for Hausdorff dimension)} For every set $E\subseteq\mathbb{R}^n$,
			\[\dim_H(E)=\adjustlimits\min_{A\subseteq\N}\sup_{x\in E}\,\dim^A(x)\,.\]
		\end{thm}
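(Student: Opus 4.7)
The plan is to prove the two inequalities in $\dim_H(E) = \min_{A\subseteq\N}\sup_{x\in E}\dim^A(x)$ separately. The direction $\dim_H(E) \leq \sup_{x \in E} \dim^A(x)$ for every $A$ is the substantive point-to-set content used in the paper's applications; the direction $\sup_{x\in E}\dim^{A^*}(x) \leq \dim_H(E)$ for a carefully chosen oracle $A^*$ then establishes that the infimum is attained.

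For the first direction, I would fix an arbitrary $A$ and any $s > \sup_{x \in E} \dim^A(x)$ and build a Hausdorff cover of $E$ directly from short $A$-programs. The key observation is that $F_r := \{q \in \Q^n : K^A(q) \leq sr\}$ has at most $2^{sr+O(1)}$ elements, while every $x \in E$ is within $2^{-r}$ of some $q \in F_r$ for infinitely many $r$ (because $\liminf_r K^A_r(x)/r < s$). Hence $x$ lies in $\limsup_r E_r$, where $E_r$ is the $2^{-r}$-neighborhood of $F_r$, and for any $t > s$ the cover $\bigcup_{r \geq R} E_r$ of $E$ has $t$-Hausdorff sum bounded by a convergent geometric series $\sum_{r \geq R} 2^{sr + O(1)}(2^{-r+1})^t$ tending to $0$ as $R \to \infty$. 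This forces $H^t(E) = 0$ for every $t > s$, so $\dim_H(E) \leq s$.

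For the reverse direction, I would exploit the fact that $H^{\dim_H(E)+\ve}(E) = 0$ for every rational $\ve > 0$: for each such $\ve$ and each $m \in \N$ pick a countable cover $\mathcal{C}_{\ve,m}$ of $E$ with all $|U| \leq 2^{-m}$ and $\sum_{U \in \mathcal{C}_{\ve,m}} |U|^{\dim_H(E)+\ve} \leq 1$, and encode all these covers, together with a rational representative inside each element, into one oracle $A^*$. Given $x \in E$, for every $m$ some $U_m(x) \in \mathcal{C}_{\ve,m}$ contains $x$, with $|U_m(x)| \in [2^{-r_m - 1}, 2^{-r_m}]$ for some $r_m \geq m$. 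Kraft coding against weights $p_U = |U|^{\dim_H(E)+\ve}$ assigns the index of $U_m(x)$ a prefix code of length at most $(\dim_H(E)+\ve)r_m + O(1)$; adding $O(\log r_m)$ bits to self-delimit $\ve$ and $m$ yields $K^{A^*}_{r_m}(x) \leq (\dim_H(E)+\ve)r_m + O(\log r_m)$ at precisions $r_m \to \infty$. Hence $\dim^{A^*}(x) \leq \dim_H(E)+\ve$, and letting $\ve$ range over positive rationals gives $\dim^{A^*}(x) \leq \dim_H(E)$ for every $x \in E$.

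The main obstacle will be the second direction's Kraft step: the scales $r_m$ at which $x$ is efficiently coded depend on the \emph{actual} diameters $|U_m(x)|$, which can be strictly smaller than the guaranteed $2^{-m}$, so I must verify that the $r_m$ tend to infinity and indeed realize a $\liminf$ bound on $K^{A^*}_r(x)/r$. A secondary issue is combining the countably many covers, Kraft codes, and rational representatives into a single decodable oracle while keeping the parameter-specification overhead truly $o(r)$ at the relevant precisions.
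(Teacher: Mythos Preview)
Your proposal is correct and mirrors the paper's proof closely: the $(\leq)$ direction is identical (cover $E$ by balls around $A$-simple rationals and bound the $t$-sum of $\bigcup_{r\geq R}\mathcal{B}_r$ by a geometric series), and the $(\geq)$ direction likewise encodes, into a single oracle, a family of covers $\{U_i^{t,s}\}$ with $\sum_i|U_i^{t,s}|^s<1$ over all scales $t$ and rational $s>\dim_H(E)$, then argues that specifying the cover element containing $x$ costs at most $rs+o(r)$ bits at the precision $r$ matching that element's diameter. The one cosmetic difference is that you specify the element via a Kraft code against weights $|U|^{s}$, whereas the paper observes that fewer than $2^{(r+4)s}$ elements have diameter $\geq 2^{-r-4}$ and transmits the rank among these; both devices extract the same bound from $\sum|U|^s<1$, and the paper's separate treatment of the degenerate case $|U_m(x)|=0$ (where your dyadic $r_m$ is undefined but the oracle already pins down $x$, giving $\dim^{A^*}(x)=0$) is exactly the edge case your obstacle paragraph anticipates.
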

		
		Three things should be noted about this principle. First, while the left-hand side is the \emph{classical} Hausdorff dimension, which is a global property of $E$ that does not involve the theory of computing, the right-hand side is a pointwise property of the set that makes essential use of relativized algorithmic information theory. Second, as the proof shows, the right-hand side is a minimum, not merely an infimum. Third, and most crucially, this principle implies that, in order to prove a lower bound $\dim_H(E)\geq\alpha$, it suffices to show that, for every $A\subseteq\N$ and every $\ve>0$, there is a point $x\in E$ such that $\dim^A(x)\geq\alpha-\ve$.\footnote{The $\ve$ here is useful in general but is not needed in some cases, including our proof of Theorem~\ref{thm:Davies} below.}
		
		For the $(\geq)$ direction of this principle, we construct the minimizing oracle $A$. The oracle encodes, for a carefully chosen sequence of increasingly refined covers for $E$, the approximate locations and diameters of all cover elements. Using this oracle, a point $x\in\R^n$ can be approximated by specifying an appropriately small cover element that it belongs to, which requires an amount of information that depends on the number of similarly-sized cover elements. We use the definition of Hausdorff dimension to bound that number. The $(\leq)$ direction can be shown using results from~\cite{LutMay08}, but in the interest of self-containment we prove it directly.
		\begin{proofof}{Theorem~\ref{thm:hausdorff}}
			Let $E\subseteq\R^n$, and let $d=\dim_H(E)$. For every $s>d$ we have $H^s(E)=0$, so there is a sequence $\{\{U^{t,s}_i\}_{i\in\N}\}_{t\in\N}$ of countable covers of $E$ such that  $\big|U_i^{t,s}\big|\leq2^{-t}$ for every $i,t\in\N$, and for every sufficiently large $t$ we have
			\begin{equation}\label{eq:diam}
			\sum_{i\in\N} \left|U^{t,s}_i\right|^s<1\,.
			\end{equation}
			
			Let $D=\N^3\times(\Q\cap(d,\infty))$.
			Our oracle $A$ encodes functions
			$f_A:D\to\Q^n$
			and
			$g_A:D\to\Q$
			such that for every $(i,t,r,s)\in D$, we have \[f_A(i,t,r,s)\in B_{2^{-r-1}}(u)\]
			for some $u\in U_i^{t,s}$ and
			\begin{equation}\label{eq:diamerr}
				\Big|g_A(i,t,r,s)-\big|U_i^{t,s}\big|\Big|<2^{-r-4}\,.
			\end{equation}
			We will show, for every $x\in E$ and rational $s>d$, that $\dim^A(x)\leq s$.
			
			Fix $x\in E$ and $s\in\Q\cap(d,\infty)$. If for any $i_0,t_0\in\N$ we have $x\in U_{i_0}^{t_0,s}$ and $\left|U_{i_0}^{t_0,s}\right|=0$, then $U_{i_0}^{t_0,s}=\{x\}$, so $f_A(i_0,t_0,r,s)\in B_{2^{-r}}(x)$ for every $r\in\N$. In this case, let $M$ be a prefix Turing machine with oracle access to $A$ such that, whenever $U(\iota)=i\in\N$,  $U(\tau)=t\in\N$, $U(\rho)=r\in\N$, and $U(\sigma)=q\in\Q\cap(d,\infty)$,
			\[M(\iota\tau\rho\sigma)=f_A(i,t,r,q)\,.\]
			Now for any $r\in\N$, let $\iota$, $\tau$, $\rho$, and $\sigma$ be witnesses to $K(i_0)$, $K(t_0)$, $K(r)$, and $K(s)$, respectively.
			Since $i_0$, $t_0$, and $s$ are all constant in $r$ and $|\rho|=o(r)$, we have $|\iota\tau\rho\sigma|=o(r)$. Thus $K^A_r(x)=o(r)$, and $\dim^A(x)=0$. Hence assume that every cover element containing $x$ has positive diameter.
			
			Fix sufficiently large $t$, and let $U_{i_x}^{t,s}$ be some cover element containing $x$. 
			Let $M^\prime$ be a self-delimiting Turing machine with oracle access to $A$ such that whenever $U(\kappa)=k\in\N$, $U(\tau)=\ell\in\N$, $U(\rho)=r\in\N$, and $U(\sigma)=q\in\Q\cap(d,\infty)$,
			\[M^\prime(\kappa\tau\rho\sigma)=f_A(p,\ell,r,q)\,,\]
			where $p$ is the $k$\textsuperscript{th} index $i$ such that
			$g_A(i,t,r,q)\geq 2^{-r-3}$.
			
			Now fix $r\geq t-1$ such that
			\[\big|U_{i_x}^{t,s}\big|\in \left[2^{-r-2},2^{-r-1}\right)\,.\]
			Notice that $g_A(i_x,t,r,s)\geq 2^{-r-3}$. Hence there is some $k$ such that, letting $\kappa$, $\tau$, $\rho$, and $\sigma$ be witnesses to $K(k)$, $K(t)$, $K(r)$, and $K(s)$, respectively,
			\[M^\prime(\kappa\tau\rho\sigma)\in B_{2^{-r-1}}(u)\,,\]
			for some $u\in U^{t,s}_{i_x}$.
			Because $\big|U_{i_x}^{t,s}\big|<2^{-r-1}$ and $x\in	U_{i_x}^{t,s}$, we have \[M^\prime(\kappa\tau\rho\sigma)\in B_{2^{-r}}(x)\,.\]
			Thus
			\[K^A_{r}(x)\leq K(k)+K(t)+K(s)+K(r)+c\,,\]
			where $c$ is a machine constant for $M^\prime$. Since $s$ is constant in $r$ and $t<r$, Observation~\ref{obs:KBoundN} tells us that this expression is $K(k)+o(r)\leq\log(k)+o(r)$. By~(\ref{eq:diam}), there are fewer than $2^{(r+4)s}$ indices $i\in\N$ such that
			\[\left|U_i^{t,s}\right|\geq2^{-r-4}\,,\]
			hence by~(\ref{eq:diamerr}) there are fewer than $2^{(r+4)s}$ indices $i\in\N$ such that $g_A(i,t,r,s)\geq 2^{-r-3}$,
			so $\log(k)<{(r+4)s}$.
			Therefore $K_{r}^A(x)\leq rs+o(r)$.
			
			There are infinitely many such $r$, which can be seen by replacing $t$ above with $r+2$. We have shown
			\[\dim^A(x)=\liminf_{r\to\infty}\frac{K_r^A(x)}{r}\leq s\,,\]
			for every rational $s>d$, hence $\dim^A(x)\leq d$. It follows that
			\[\adjustlimits\min_{A\subseteq\N}\sup_{x\in E}\,\dim^A(x)\leq d\,.\]
			
			For the other direction, assume for contradiction that there is some oracle $A$ and $d^\prime<d$ such that
			\[\sup_{x\in E}\,\dim^A(x)= d^\prime\,.\]
			Then for every $x\in E$, $\dim^A(x)\leq d^\prime$.  Let $s\in(d^\prime,d)$. For every $r\in\N$, define the sets
			\[\mathcal{B}_r=\left\{B_{2^{-r}}(q)\,:\,q\in\Q\textrm{ and }K^A(q)\leq rs\right\}\]
			and
			\[\mathcal{W}_r=\bigcup_{k=r}^\infty\mathcal{B}_k\,.\]
			There are at most $2^{ks+1}$ balls in each $\mathcal{B}_k$, so for every $r\in\N$ and $s^\prime\in(s,d)$,
			\begin{align*}
			\sum_{W\in \mathcal{W}_r}|W|^{s^\prime}&=\sum_{k=r}^\infty\sum_{W\in\mathcal{B}_k}|W|^{s^\prime}\\
			&\leq\sum_{k=r}^\infty 2^{ks+1}(2^{1-k})^{s^\prime}\\
			&=2^{1+s'}\cdot\sum_{k=r}^\infty 2^{(s-s^\prime)k}\,,
			\end{align*}
			which approaches $0$ as $r\to\infty$. As every $\mathcal{W}_r$ is a cover for $E$, we have $H^{s^\prime}(E)=0$, so $\dim_H(E)\leq s^\prime<d$, a contradiction.\qed
		\end{proofof}
		
		The \emph{packing dimension} $\dim_P(E)$ of a set $E\subseteq\R^n$, defined in the appendix and standard texts, e.g.,~\cite{Falc14}, is a dual of Hausdorff dimension satisfying $\dim_P(E)\geq\dim_H(E)$, with equality for very ``regular'' sets $E$. We also have the following.
		\begin{thm}\label{thm:packing}
			\textup{(Point-to-set principle for packing dimension)} For every set $E\subseteq\R^n$,
			\[\dim_P(E)=\adjustlimits\min_{A\subseteq\N}\sup_{x\in E}\,\Dim^A(x)\,.\]
		\end{thm}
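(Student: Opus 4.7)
The plan is to mirror the proof of Theorem~\ref{thm:hausdorff}, substituting $\Dim^A$ for $\dim^A$ and using the countable-stability characterization of packing dimension,
\[\dim_P(E)=\inf\Big\{\sup_{i\in\N}\overline{\dim}_B(E_i)\,:\,E\subseteq\bigcup_{i\in\N}E_i\Big\}\,,\]
where $\overline{\dim}_B$ denotes upper box-counting dimension. This characterization is the natural match for the $\limsup$ defining $\Dim^A$, since upper box counting demands a cover-size bound at \emph{all} sufficiently large scales, rather than at infinitely many scales as for Hausdorff dimension.

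For the $(\geq)$ direction, set $d=\dim_P(E)$. For each rational $s>d$, fix a decomposition $E=\bigcup_{i\in\N} E_i^s$ with $\overline{\dim}_B(E_i^s)<s$; for each pair $(i,s)$ choose a threshold $R_{i,s}\in\N$ such that for every $r\geq R_{i,s}$ the set $E_i^s$ admits a cover by at most $2^{rs}$ open balls of radius $2^{-r-1}$, and fix such a cover. Let the oracle $A$ encode rational approximations of the centers of all these balls, indexed by $(i,s,r,k)$, together with the values $R_{i,s}$. Given $x\in E$ and rational $s>d$, pick $i$ with $x\in E_i^s$; for every $r\geq R_{i,s}$ one can describe $x$ to precision $r$ by specifying $i$, $s$, $r$, and the index $k\leq 2^{rs}$ of a cover ball containing $x$, at cost $rs+o(r)$ bits relative to $A$ (since $i$, $s$, and $R_{i,s}$ are constant in $r$, and $K(k)\leq\log k+o(\log k)\leq rs+o(r)$). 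Hence $K_r^A(x)\leq rs+o(r)$ for \emph{all} sufficiently large $r$, giving $\Dim^A(x)\leq s$; letting $s\searrow d$ yields $\Dim^A(x)\leq d$ for every $x\in E$.

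For the $(\leq)$ direction, suppose for contradiction that some oracle $A$ satisfies $\sup_{x\in E}\Dim^A(x)\leq d'<d$, and fix $s\in(d',d)\cap\Q$. For each $x\in E$ the definition of $\Dim^A$ supplies a threshold $r_x$ with $K_r^A(x)\leq rs$ for all $r\geq r_x$; set $E_j=\{x\in E:K_r^A(x)\leq rs\text{ for all }r\geq j\}$, so $E=\bigcup_{j\in\N}E_j$. For every $r\geq j$, each point of $E_j$ lies within $2^{-r}$ of some $q\in\Q^n$ with $K^A(q)\leq rs$; by Kraft's inequality there are at most $2^{rs+1}$ such $q$, so $E_j$ is covered by $2^{rs+1}$ balls of radius $2^{-r}$, yielding $\overline{\dim}_B(E_j)\leq s$. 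The countable-stability characterization then forces $\dim_P(E)\leq s<d$, a contradiction.

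The main obstacle is the $(\geq)$ direction: a single oracle must handle every rational $s>d$ and every $x\in E$ uniformly, with bookkeeping overhead $o(r)$, and the cover bound $N_r(E_i^s)\leq 2^{rs}$ must hold at \emph{every} sufficiently large $r$ in order to control the $\limsup$ defining $\Dim^A$ (as opposed to the $\liminf$ in $\dim^A$, which only needs the bound infinitely often). Both requirements are met by the decomposition above.
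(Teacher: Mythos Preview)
Your proposal is correct. Both directions go through as written: in the $(\geq)$ direction the box-counting bound $N_{2^{-r-1}}(E_i^s)\le 2^{rs}$ for all large $r$ is exactly what $\overline{\dim}_B(E_i^s)<s$ gives, and the bookkeeping for $i,s,r$ is $o(r)$; in the $(\leq)$ direction your sets $E_j$ are covered at every scale $r\ge j$ by at most $2^{rs+1}$ balls, which pins down $\overline{\dim}_B(E_j)\le s$ and forces $\dim_P(E)\le s$ via countable stability. (One cosmetic point: the characterization gives a \emph{cover} $E\subseteq\bigcup_iE_i^s$, not a partition, and you should take the $E_i^s$ bounded so that $\overline{\dim}_B$ is unambiguously defined; both are harmless refinements.)

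Your route differs from the paper's. The paper works directly from the packing-\emph{measure} definition: for $(\geq)$ it takes a cover $\{E_j^s\}$ with $\sum_jP_0^s(E_j^s)<1$, builds \emph{maximal packings} of each piece by balls of radius $2^{-r-2}$, encodes a point of each packing ball in the oracle, and bounds the number of such balls via $P^s_{2^{-r}}(E_j^s)<1$; for $(\leq)$ it uses the same decomposition $E_i=\bigcap_{k\ge i}C_k$ that you do but then shows $P_0^{s'}(E_i)=0$ by a direct packing estimate rather than invoking $\overline{\dim}_B$. So the $(\leq)$ arguments are essentially the same underneath different vocabulary, while the $(\geq)$ arguments are genuinely dual: the paper exploits that few disjoint balls can be \emph{packed} into each piece, you exploit that few balls suffice to \emph{cover} each piece. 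Your use of the modified-upper-box characterization is arguably cleaner here, because the ``cover at \emph{every} scale'' condition of $\overline{\dim}_B$ aligns one-to-one with the $\limsup$ defining $\Dim^A$, and it lets you skip the maximal-packing construction. The paper's approach, on the other hand, stays closer to the primary definition of $\dim_P$ and does not rely on the equivalence with modified box dimension.
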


		\section{Conditional Kolmogorov Complexity in Euclidean Spaces}\label{sec:ckces}
		We now develop the conditional Kolmogorov complexity in Euclidean spaces.
		
		For $x \in \R^m$, $q\in\Q^n$, and $r\in\N$, the \emph{conditional Kolmogorov complexity} of $x$ at \emph{precision} $r$ \emph{given} $q$ is
		\begin{equation}\label{eq:Krxq}
			\hat{K}_r(x|q)=\min\left\{K(p|q)\,:\,p\in\Q^m\cap B_{2^{-r}}(x)\right\}\,.
		\end{equation}
		For $x\in\R^m$, $y\in\R^n$, and $r,s\in\N$, the \emph{conditional Kolmogorov complexity} of $x$ at \emph{precision} $r$ \emph{given} $y$ at \emph{precision} $s$ is
		\begin{equation}\label{eq:Krsxy}
		K_{r,s}(x|y)=\max\big\{\hat{K}_r(x|q)\,:\,q\in\Q^n\cap B_{2^{-s}}(y)\big\}\,.
		\end{equation}
		Intuitively, the maximizing argument $q$ is the point near $y$ that is least helpful in the task of approximating $x$.
		Note that $K_{r,s}(x|y)$ is finite, because $\hat{K}_r(x|q)\leq K_r(x)+O(1)$. For $x\in\R^m$, $y\in\R^n$, and $r\in\N$, the \emph{conditional Kolmogorov complexity} of $x$ \emph{given} $y$ at \emph{precision} $r$ is
		\begin{equation}\label{eq:Krxy}
		K_r(x|y)=K_{r,r}(x|y)\,.
		\end{equation}
		
		\begin{thm}\label{thm:KrChainRule}\textup{({Chain rule for $K_r$})}
			For all $x\in\R^m$ and $y\in\R^n$,
			\[K_r(x,y)=K_r(x|y) + K_r(y)+o(r)\,.\]
		\end{thm}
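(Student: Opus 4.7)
The plan is to lift the classical Kolmogorov--Levin chain rule for strings, $K(u,v) = K(u) + K(v|u) + O(\log K(u,v))$, to the Euclidean setting via rational approximations. Since $K_r(x,y)$, $K_r(y)$, and $K_r(x|y)$ are all $O(r)$ (from the trivial encoding of a rational with denominator $2^r$), any $O(\log)$ slack in the string chain rule becomes $O(\log r) = o(r)$.

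For the upper bound $K_r(x,y) \leq K_r(x|y) + K_r(y) + o(r)$, I would pick $q_0 \in \Q^n \cap B_{2^{-r}}(y)$ witnessing $K_r(y)$ and then $p_0 \in \Q^m \cap B_{2^{-r}}(x)$ witnessing $\hat{K}_r(x|q_0)$. Because $q_0$ lies in the ball over which $K_r(x|y)$ is a maximum, $\hat{K}_r(x|q_0) \leq K_r(x|y)$. Simple program concatenation gives $K(p_0, q_0) \leq K(q_0) + K(p_0 \mid q_0) + O(1) \leq K_r(y) + K_r(x|y) + O(1)$, and an $O(1)$ precision tightening (to absorb the $\sqrt{2}$-factor between $B_{2^{-r}}(x) \times B_{2^{-r}}(y)$ and $B_{2^{-r}}\bigl((x,y)\bigr)$) yields a rational witness for $K_r(x,y)$.

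The lower bound $K_r(x,y) \geq K_r(x|y) + K_r(y) - o(r)$ is the main obstacle, because it demands that \emph{every} $q \in \Q^n \cap B_{2^{-r}}(y)$ satisfies $\hat{K}_r(x|q) \leq K_r(x,y) - K_r(y) + o(r)$. Starting from an optimizer $(p^*,q^*)$ for $K_r(x,y)$, the Kolmogorov--Levin chain rule yields $K(p^* \mid q^*) \leq K_r(x,y) - K(q^*) + o(r) \leq K_r(x,y) - K_r(y) + o(r)$; the difficulty is that this bound lives at the single rational $q^*$ and must be transferred to arbitrary $q$. To effect the transfer I would prove a canonical-representative lemma: writing $z \upharpoonright r$ for the componentwise binary truncation of $z$ at precision $2^{-r}$, $\hat{K}_r(x|q) = K(x \upharpoonright r \mid y \upharpoonright r) + O(1)$ uniformly for every $q \in \Q^n \cap B_{2^{-r}}(y)$. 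This is because (i) any $p \in B_{2^{-r}}(x)$ determines $x \upharpoonright r$ up to $O(1)$ alternatives, so $\hat{K}_r(x|q) = K(x \upharpoonright r \mid q) + O(1)$, and (ii) $B_{2^{-r}}(q)$ meets only $O(1)$ dyadic cells of side $2^{-r}$, so $y \upharpoonright r$ is recoverable from $q$ with $O(1)$ disambiguating bits; the matching lower bound is attained at $q = y \upharpoonright r$. Combined with the footnote identities $K_r(y) = K(y \upharpoonright r) + o(r)$ and $K_r(x,y) = K(x \upharpoonright r, y \upharpoonright r) + o(r)$, the classical string chain rule on $x \upharpoonright r$ and $y \upharpoonright r$ closes the argument. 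The main care needed is in managing disambiguation constants, which depend on $m$ and $n$ but not on $r$.
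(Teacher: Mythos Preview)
Your approach is essentially correct and is genuinely different from the paper's proof. The paper does not argue directly at all: it combines an external identity $I_r(x:y)=K_r(x)+K_r(y)-K_r(x,y)+o(r)$ from~\cite{CasLut15} with this paper's Lemma~\ref{lem:MutualInfo}, $I_r(x:y)=K_r(x)-K_r(x|y)+o(r)$, and cancels $K_r(x)$. All the real work is thus hidden in the proof of Lemma~\ref{lem:MutualInfo}, which in turn leans on several technical lemmas of~\cite{CasLut15} about $K$-minimizers in balls. Your route---reduce everything to the fixed dyadic representatives $x\upharpoonright r$ and $y\upharpoonright r$ and invoke the string chain rule---is more elementary and self-contained, and it makes transparent \emph{why} the $\max_q$ in the definition of $K_r(x|y)$ does not cause trouble: every $q\in B_{2^{-r}}(y)$ determines $y\upharpoonright r$ up to $O_n(1)$ candidates, so conditioning on $q$ is never worse (by more than $o(r)$) than conditioning on $y\upharpoonright r$.

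One small correction: your ``$+O(1)$'' in the canonical-representative lemma should be ``$+O(\log r)$'', not because the disambiguation constants grow, but because in each of your reductions (finding $x\upharpoonright r$ from a near-rational $p$, or finding $y\upharpoonright r$ from $q$) the machine must be told the precision level $r$ in order to know which dyadic grid to snap to. This costs $K(r)=O(\log r)$, which is still $o(r)$, so the argument survives; just be careful not to claim a constant independent of $r$ there. Similarly, for $m>1$ or $n>1$ the truncation $x\upharpoonright r$ lies in $B_{\sqrt{m}\,2^{-r}}(x)$ rather than $B_{2^{-r}}(x)$, so you will silently invoke the linear sensitivity of $\hat K_r$ in $r$ (the paper's Lemma~\ref{lem:LinSensCondKr}) when you speak of an ``$O(1)$ precision tightening.''
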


		We also consider the Kolmogorov complexity of $x\in\R^m$ at precision $r$ \emph{relative} to $y\in\R^n$. Let $K^y_r(x)$ denote $K^{A_y}_r(x)$, where $A_y\subseteq\N$ encodes the binary expansions of $y$'s coordinates. The following lemma reflects the intuition that oracle access to $y$ is at least as useful as any bounded-precision estimate for $y$.
		\begin{lem}\label{lem:relcond}
			For each $m,n\in\N$ there is a constant $c\in\N$ such that, for all $x\in\R^m$, $y\in\R^n$, and $r,s\in\N$,
			\[K_r^y(x)\leq K_{r,s}(x|y)+K(s)+c\,.\]
			In particular, $K_r^y(x)\leq K_r(x|y)+K(r)+c$.
		\end{lem}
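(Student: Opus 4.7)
The plan is to exploit the observation that an oracle for $y$ is at least as informative about $y$ as any finite-precision rational estimate. Given $s\in\N$ and oracle access to $A_y$, one can compute in a canonical way a rational point $\tilde{q}_s\in\Q^n\cap B_{2^{-s}}(y)$ --- for instance, by truncating the binary expansion of each coordinate of $y$ at $s+\lceil\log_2 n\rceil+2$ bits past the binary point. Any self-delimiting program that converts a rational $q$ near $y$ into a $2^{-r}$-approximation of $x$ can therefore be lifted to a program relative to $A_y$, provided the intended precision $s$ is also supplied.

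To implement this, I would fix a self-delimiting oracle Turing machine $M$ such that whenever $U(\tau)=s\in\N$, the computation $M^{A_y}(\tau\pi)$ first queries the oracle to produce $\tilde{q}_s$ and then simulates $U(\pi,\tilde{q}_s)$. Let $c$ be the optimality constant of $U$ with respect to $M$, which depends only on $M$ and hence only on $m$ and $n$. I would take $\tau$ to be a shortest program for $s$ and $\pi$ to be a shortest $\tilde{q}_s$-conditioned program witnessing~(\ref{eq:Krxq}) at precision $r$, so $|\tau|=K(s)$ and $|\pi|=\hat{K}_r(x|\tilde{q}_s)$. Because $\tilde{q}_s\in\Q^n\cap B_{2^{-s}}(y)$, definition~(\ref{eq:Krsxy}) yields $\hat{K}_r(x|\tilde{q}_s)\le K_{r,s}(x|y)$. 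The output $M^{A_y}(\tau\pi)$ lies in $\Q^m\cap B_{2^{-r}}(x)$, so
\[K^y_r(x)\;\le\;|\tau|+|\pi|+c\;\le\;K_{r,s}(x|y)+K(s)+c.\]
The ``In particular'' clause is then the specialization $s=r$, since $K_r(x|y)=K_{r,r}(x|y)$ by~(\ref{eq:Krxy}).

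The only delicate step is verifying that the canonical computation of $\tilde{q}_s$ from $A_y$ halts after finitely many oracle queries and produces a rational actually in $B_{2^{-s}}(y)$; this is a routine coding argument once the precise encoding of $A_y$ is fixed, and the associated overhead is absorbed into the single constant $c$. Everything else is the standard simulation argument for bounded-precision Kolmogorov complexity, so I do not expect any real obstacle.
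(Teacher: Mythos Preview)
Your proposal is correct and matches the paper's proof essentially line for line: the paper also builds an oracle machine that first parses a program for $s$, uses the oracle $A_y$ to compute the truncation $q=y\upharpoonright(s+\log\sqrt{n})$, and then simulates $U(\pi,q)$, bounding $\hat{K}_r(x|q)\le K_{r,s}(x|y)$ because $q\in\Q^n\cap B_{2^{-s}}(y)$. The only cosmetic difference is the exact truncation depth, which in both cases is absorbed into the constant $c$.
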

		\section{Kakeya Sets in the Plane}\label{sec:ksp}
			This section uses the results of the preceding two sections to give a new proof of the following classical theorem. Recall that a \emph{Kakeya set} in $\R^n$ is a set containing a unit line segment in every direction.

		\begin{thm}\textup{(Davies \cite{Davi71})}\label{thm:Davies}
			Every Kakeya set in $\mathbb{R}^2$ has Hausdorff dimension 2.
		\end{thm}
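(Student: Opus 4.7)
The plan is to invoke the point-to-set principle of Theorem~\ref{thm:hausdorff}: since $\dim_H(K)\leq 2$ is automatic in $\R^2$, it suffices to show that for every oracle $A\subseteq\N$ there is a point $(x,y)\in K$ with $\dim^A(x,y)\geq 2$. Fix $A$ and enlarge it to $A'\supseteq A$ that additionally encodes, for every slope $m$, a unit segment of $K$ lying on the line $y=mx+b(m)$; concretely $A'$ encodes the functions $m\mapsto b(m)$ and $m\mapsto x_0(m)$ (a left endpoint), parametrising directions by slope with vertical lines treated symmetrically. For Borel Kakeya sets such measurable selectors exist by standard uniformisation, and in full generality the axiom of choice supplies functions whose graphs we encode directly into $A'$. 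Relative to $A'$ the quantities $b(m)$ and $x_0(m)$ are computable from $m$. Since adding oracle information can only decrease Kolmogorov complexity, $\dim^A(z)\geq\dim^{A'}(z)$ for every $z$, so it is enough to produce $(x,y)\in K$ with $\dim^{A'}(x,y)\geq 2$.

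Pick a slope $m$ with $\dim^{A'}(m)=1$ (Lebesgue-almost every $m$ works, by the facts recalled in Section~\ref{sec:dpes}), write $b=b(m)$, and let $L\subseteq K$ be the associated unit segment. Pick $x$ in the first-coordinate projection of $L$ with $\dim^{A',m}(x)=1$; since $b$ is $A'$-computable from $m$, this equals $\dim^{A',m,b}(x)$. Set $y=mx+b$, so $(x,y)\in K$. Apply the chain rule of Theorem~\ref{thm:KrChainRule} to the pair $((x,y),m)$ and use the precision-$r$ mutual computability of the triples $(x,y,m)$ and $(x,m,b)$ coming from $b=y-mx$ and $y=mx+b$ to obtain
\[
K_r^{A'}(x,y)=K_r^{A'}(x,m,b)-K_r^{A'}(m\mid x,y)+o(r).
\]
Iterating the chain rule, using the $A'$-computability of $b$ from $m$ to drop the $K_r^{A'}(b\mid m)$ contribution, and applying Lemma~\ref{lem:relcond} together with the randomness of $m$ relative to $A'$ and of $x$ relative to $(A',m)$, one obtains $K_r^{A'}(x,m,b)=K_r^{A'}(x\mid m)+K_r^{A'}(m)+o(r)\geq 2r-o(r)$.

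The main obstacle is the complementary bound $K_r^{A'}(m\mid x,y)=o(r)$; once it is in hand, the display yields $K_r^{A'}(x,y)\geq 2r-o(r)$ for infinitely many $r$, hence $\dim^{A'}(x,y)\geq 2$, and the point-to-set principle closes the argument. Intuitively, because $b$ is $A'$-computable, a precision-$r$ rational approximation to $(x,y)$ determines $m$ as a solution of the equation $y=m'x+b(m')$ in the unknown $m'$: one enumerates dyadic candidates $m'$ and, for each, compares $y-m'x$ to $b(m')$ at matched precision, localising $m$ in a shrinking window. The subtlety, which I expect to be the technical heart of the argument, is that the selector $b$ need not be continuous, so one must carefully link the precision to which $(x,y)$ is given to the precision to which $m$ and $b$ are resolved, and then check that the randomly chosen $m$ lies in the full-measure set on which this multi-scale recovery succeeds. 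This matching of precisions is exactly what the two-parameter conditional complexity $K_{r,s}(x\mid y)$ of Section~\ref{sec:ckces} is designed to track.
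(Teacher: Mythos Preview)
The oracle-enlargement step is where your argument breaks. An oracle $A'\subseteq\N$ carries only countably many bits of information, so there is no meaningful sense in which it can ``encode the graph'' of an arbitrary selector $m\mapsto b(m)$ defined on the continuum; the appeal to the axiom of choice does not help here. What you actually need is that $b(m)$ be $A'$-computable from the real $m$, so that $K_r^{A'}(b\,|\,m)=o(r)$, but a real function that is computable relative to an oracle is necessarily continuous, and a Kakeya set need not admit any continuous selector of intercepts (for arbitrary $K$ it need not even admit a Borel one). Nor can you repair this by encoding only the single real $b=b(m)$ attached to the slope you will eventually choose: $A'$ must be fixed \emph{before} you pick $m$ with $\dim^{A'}(m)=1$, and folding $b$ into the oracle afterward may destroy the randomness of $m$ relative to the enlarged oracle.

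The paper's proof avoids this by never enlarging the oracle and never asking $K_r^A(b\,|\,m)$ to be small. Lemma~\ref{lem:main} (relativized to $A$) shows that for almost every $x$, a precision-$r$ approximation to $(x,mx+b)$, together with a program $\pi$ of length $\hat{K}_r(b\,|\,m_r)$ and a short hint $h$ with $\log h=o(r)$ (Claim~\ref{cla:Meas0}), determines $(m,b,x)$ to precision $r-1$. The enumeration of candidate slopes you sketch is exactly the machine $M$ in that proof, but the crucial difference is that $\pi$ need only approximate $b$ from inputs \emph{near the true $m$}; it is a witness to the conditional complexity at that one point, not a global selector. The resulting inequality $K_r^A(m,b,x)\le K_r^A(x,mx+b)+K_r^A(b\,|\,m)+o(r)$ still carries the unwanted $K_r^A(b\,|\,m)$ term, but two applications of the chain rule cancel it exactly: $K_r^A(m,b,x)-K_r^A(b\,|\,m)=K_r^A(x\,|\,b,m)+K_r^A(m)+o(r)\ge 2r-o(r)$ by the choices of $m$ and $x$. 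So the quantity you tried to force to $o(r)$ by oracle engineering is instead allowed to be large and then subtracted off.
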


		Our new proof of Theorem~\ref{thm:Davies} uses a relativized version of the following lemma.
		\begin{lem}\label{lem:main}
			Let $m\in[0,1]$ and $b\in\mathbb{R}$. Then for almost every $x\in[0,1]$, \begin{equation}\label{eq:mainlem}
			\liminf_{r\to\infty}\frac{K_r(m,b,x)-K_r(b|m)}{r}\leq\dim(x,mx+b)\,.
			\end{equation}
		\end{lem}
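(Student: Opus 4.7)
The plan is to use the chain rule (Theorem~\ref{thm:KrChainRule}) to rewrite the left-hand side of the displayed inequality and reduce the lemma to a single reconstruction claim for $m$ given $(x,mx+b)$. Writing $z=mx+b$ and using that the transformations $b\mapsto z-mx$ and $z\mapsto mx+b$ are bi-Lipschitz (with constants absorbed in $o(r)$ since $m,x\in[0,1]$), one has $K_r(m,b,x)=K_r(m,x,z)+o(r)$. A second application of the chain rule yields
\[
K_r(m,b,x)-K_r(b\mid m)=K_r(x,mx+b)+\bigl(K_r(m\mid x,mx+b)-K_r(b\mid m)\bigr)+o(r).
\]
Dividing by $r$ and taking $\liminf$, the lemma reduces to the key reconstruction inequality: for almost every $x\in[0,1]$,
\[
K_r(m\mid x,mx+b)\leq K_r(b\mid m)+o(r).
\]

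To establish this inequality, I exhibit, given $(x,mx+b)$, a short program that approximates $m$ using an optimal program $\pi$ of length $K_r(b\mid m)$ for $b$ given $m$. By~(\ref{eq:Krxq}), whenever $\tilde m$ is a rational within $2^{-r}$ of $m$, $U(\pi,\tilde m)$ is a rational within $2^{-r}$ of $b$. The reconstruction algorithm, given rational approximations $(\tilde x,\tilde z)$ of $(x,z)$, enumerates candidate rationals $\tilde m$, computes $\tilde b=U(\pi,\tilde m)$, and outputs a $\tilde m$ for which $|\tilde m\tilde x+\tilde b-\tilde z|$ falls below a calibrated threshold of order $2^{-r}$. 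The correct $\tilde m\approx m$ always passes; the code length is $K_r(b\mid m)+O(\log r)$ after accounting for the self-delimiting encoding of $\pi$ and a short index identifying which passing candidate to output.

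The main obstacle is the measure-theoretic control of spurious $\tilde m'\neq m$ that also pass the check. In exact form, each pair $(\pi,\tilde m')$ determines a single $x$ via the linear equation $(\tilde m'-m)x=b-U(\pi,\tilde m')$, so the union of exact spurious solutions across the countably many such pairs is a Lebesgue-null set. For the approximate case, each pair contributes an interval of bad $x$ of length at most $O(2^{-r}/|\tilde m'-m|)$, and the essential point is that these intervals, viewed as incidences between the countable graph $\{(\tilde m,U(\pi,\tilde m))\}\subseteq\R^2$ and the line $\{(\mu,\beta):\mu x+\beta=z\}$ through $(m,b)$, are concentrated near $\tilde m=m$ for generic $x$. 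A careful decomposition over dyadic shells of $|\tilde m'-m|$ together with a Borel--Cantelli argument along a suitably chosen sequence of precision scales then shows that the set of $x$ for which the reconstruction fails is Lebesgue-null, completing the proof.
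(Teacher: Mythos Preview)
Your approach is the paper's approach: reconstruct $m$ from approximations to $(x,mx+b)$ by enumerating candidate slopes $\tilde m$, testing each via $|\tilde m\tilde x+U(\pi,\tilde m)-\tilde z|<O(2^{-r})$, supplying a short index $h$ to select among the candidates that pass, and showing $\log h(x,r)=o(r)$ for a.e.\ $x$ by integrating $h(x,r)$ over $[0,1]$ (each spurious $\tilde m'$ contributes an interval of bad $x$ of length $O(2^{-r}/|\tilde m'-m|)$, summing to $O(r)$) and applying Borel--Cantelli. Your chain-rule reduction to the single inequality $K_r(m\mid x,mx+b)\le K_r(b\mid m)+o(r)$ is a clean way to package the same bound the paper derives directly for $K_r(m,b,x)$.

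There is one genuine slip. You assert that a single program $\pi$ of length $K_r(b\mid m)$ satisfies $U(\pi,\tilde m)\in B_{2^{-r}}(b)$ for \emph{every} rational $\tilde m\in B_{2^{-r}}(m)$, citing~(\ref{eq:Krxq}); but $\hat K_r(b\mid q)$ is defined for a fixed rational $q$, and $K_r(b\mid m)=\max_{q}\hat K_r(b\mid q)$ is a maximum over such $q$---no single witnessing $\pi$ is guaranteed to work uniformly over the ball. The paper handles this by fixing the grid point $m_r=2^{-r}\lfloor 2^r m\rfloor$, taking $\pi_r$ to witness $\hat K_r(b\mid m_r)$ so that the enumeration over multiples of $2^{-r}$ hits $m_r$ exactly, and proving separately (Claim~\ref{cla:floor}) that $\hat K_r(b\mid m_r)=K_r(b\mid m)+o(r)$. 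With that repair your sketch goes through.
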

		
		\begin{proof}
			We build a program that takes as input a precision level $r$, an approximation $p$ of $x$, an approximation $q$ of $mx+b$, a program $\pi$ that will approximate $b$ given an approximation for $m$, and a natural number $h$.
			In parallel, the program considers each multiple of $2^{-r}$ in [0,1] as a possible approximate value $u$ for the slope $m$, and it checks whether each such $u$ is consistent with the program's inputs. If $u$ is close to $m$, then $\pi(u)$ will be close to $b$, so $up+\pi(u)$ will be close to $mx+b$. Any $u$ that satisfies this condition is considered a ``candidate'' for approximating $m$.
			
			Some of these candidates may be ``false positives,'' in that there can be values of $u$ that are far from $m$ but for which $up+\pi(u)$ is still close to $mx+b$. Thus the program is also given an input $h$ so that it can choose the correct candidate; it selects the $h$\textsuperscript{th} candidate that arises in its execution. We will show that this $h$ is often not large enough to significantly affect the total input length.
			
			Formally, let $M$ be a Turing machine that runs the following algorithm on input $\rho\pi\sigma\eta$ whenever $U(\rho)=r\in\N$, $U(\eta)=h\in\N$, and $U(\sigma)=(p,q)\in\mathbb{Q}^2$:
			
			\begin{itemize}
				\item[] $candidate:=0$
				\item[] for $i=0,1,\ldots,2^r$, in parallel:
				\begin{itemize}
					\item[] $u_i:=2^{-r}i$
					\item[] $v_i:=U(\pi,u_i)$
					\item[] do atomically:
					\begin{itemize}
						\item[] if $v_i\in\R$ and $|u_ip+v_i-q|<2^{2-r}$, then $candidate:=candidate+1$
						\item[] if $candidate = h$, then return $(u_i,v_i,p)$ and halt
					\end{itemize}
				\end{itemize}
			\end{itemize}
			
			Fix $m\in[0,1]$ and $b\in\mathbb{R}$. For each $r\in\mathbb{N}$, let $m_r=2^{-r}\lfloor m\cdot2^{r}\rfloor$, and fix $\pi_r$ testifying to the value of $\hat{K}_r(b|m_r)$ and $\sigma_r$ testifying to the value of $K_r(x,mx+b)$. 
			
			Proofs of the following four claims appear in the appendix. Intuitively, Claim~\ref{cla:floor} says that no point in $B_{2^{-r}}(m)$ gives much less information about $b$ than $m_r$ does. Claim~\ref{cla:ExistsH} states that there is always some value of $h$ that causes this machine to return the desired output. Claim~\ref{cla:Meas0} says that for almost every $x$, this value does not grow too quickly with $r$, and Claim~\ref{cla:pointdim} says that (\ref{eq:mainlem}) holds for every such $x$.
			\begin{cla}\label{cla:floor}
				For every $r\in\N$, $K_r(b|m)=\hat{K}_r(b|m_r)+o(r)$.
			\end{cla}
			\begin{cla}\label{cla:ExistsH}
				For each $x\in[0,1]$ and $r\in\mathbb{N}$, there exists an $h\in\mathbb{N}$ such that 
				\[M(\rho\pi_r\sigma_r\eta)\in B_{2^{1-r}}(m,b,x)\,,\]
				where $U(\rho)=r$ and $U(\eta)=h$.
			\end{cla}
			For every $x\in[0,1]$ and $r\in\N$, define $h(x,r)$ to be the minimal $h$ satisfying the conditions of Claim~\ref{cla:ExistsH}.
			\begin{cla}\label{cla:Meas0}
				For almost every $x\in[0,1]$, $\log(h(x,r))=o(r)$.
			\end{cla}
			\begin{cla}\label{cla:pointdim}
				For every $x\in[0,1]$, if $\log(h(x,r))={o(r)}$, then
				\[\liminf_{r\to\infty}\frac{K_r(m,b,x)-K_r(b|m)}{r}\leq\dim(x,mx+b)\,.\]
			\end{cla}
			
			The lemma follows immediately from Claims~\ref{cla:Meas0} and~\ref{cla:pointdim}.
		\end{proof}

		\begin{proofof}{Theorem~\ref{thm:Davies}}
			Let $K$ be a Kakeya set in $\R^2$. By Theorem~\ref{thm:hausdorff}, there exists an oracle $A$ such that $\dim_H(K)=\sup_{p\in K}\dim^A(p)$.
			
			Let $m\in[0,1]$ such that $\dim^A(m)=1$; such an $m$ exists by Theorem 4.5 of~\cite{Lutz03b}. $K$ contains a unit line segment $L$ of slope $m$. Let $(x_0,y_0)$ be the left endpoint of such a segment. Let $q\in\Q\cap[x_0,x_0+1/8]$, and let $L^\prime$ be the unit segment of slope $m$ whose left endpoint is $(x_0-q,y_0)$. Let $b=y_1+qm$, the $y$-intercept of $L^\prime$. 
			
			By a relativized version of Lemma~\ref{lem:main}, there is some $x\in[0,1/2]$ such that $\dim^{A,m,b}(x)=1$ and
			\[\liminf_{r\to\infty}\frac{K^A_r(m,b,x)-K^A_r(b|m)}{r}\leq\dim^A(x,mx+b)\,.\]
			(This holds because almost every $x\in[0,1/2]$ is algorithmically random relative to $(A,m,b)$ and hence satisfies $\dim^{A,m,b}(x)=1$.)
			Fix such an $x$, and notice that $(x,mx+b)\in L^\prime$. Now applying a relativized version of Theorem~\ref{thm:KrChainRule},
			\begin{align*}
			\dim^A(x,mx+b)&\geq\liminf_{r\to\infty}\frac{K^A_r(m,b,x)-K^A_r(b|m)}{r}\\
			&=\liminf_{r\to\infty}\frac{K^A_r(m,b,x)-K^A_r(b,m)+K^A_r(m)}{r}\\
			&=\liminf_{r\to\infty}\frac{K^A_r(x|b,m)+K^A_r(m)}{r}\\
			&\geq\liminf_{r\to\infty}\frac{K^A_r(x|b,m)}{r}+\liminf_{r\to\infty}\frac{K^A_r(m)}{r}\,.
			\end{align*}
			By Lemma~\ref{lem:relcond}, $K^A_r(x|b,m)\geq K^{A,b,m}_r(x)+o(r)$, so we have
			\begin{align*}
			\dim^A(x,mx+b)&\geq\liminf_{r\to\infty}\frac{K^{A,b,m}_r(x)}{r}+\liminf_{r\to\infty}\frac{K^A_r(m)}{r}\\
			&=\dim^{A,b,m}(x)+\dim^A(m)\,,
			\end{align*}
			which is $2$ by our choices of $m$ and $x$.
			
			By Observation~\ref{obs:addrat},
			\[\dim^A(x,mx+b)=\dim^A(x+q,mx+b)\,.\] Hence, there exists a point $(x+q,mx+b)\in K$ such that $\dim^A(x+q,mx+b)\geq 2$. By Theorem~\ref{thm:hausdorff}, the point-to-set principle for Hausdorff dimension, this completes the proof.
			\qed
		\end{proofof}
		
			It is natural to ask what prevents us from extending this proof to higher-dimensional Euclidean spaces. The point of failure in a direct extension would be Claim~\ref{cla:Meas0} in the proof of Lemma~\ref{lem:main}. Speaking informally, the problem is that the total number of candidates may grow as $2^{(n-1)r}$, meaning that $\log(h(x,r))$ could be $\Omega((n-2)r)$ for every $x$.
		\section{Conditional Dimensions in Euclidean Spaces}\label{sec:cdes}
			The results of Section~\ref{sec:ckces}, which were used in the proof of Theorem~\ref{thm:Davies}, also enable us to give robust formulations of conditional dimensions.
		
			For $x\in\R^m$ and $y\in\R^n$, the \emph{lower} and \emph{upper conditional dimensions} of $x$ \emph{given} $y$ are
			\begin{equation}\label{eq:conddimDim}
			\dim(x|y)=\liminf_{r\to\infty}\frac{K_r(x|y)}{r}\qquad\textrm{and}\qquad\Dim(x|y)=\limsup_{r\to\infty}\frac{K_r(x|y)}{r}\,,
			\end{equation}
			respectively.	
			
			The use of the same precision bound $r$ for both $x$ and $y$ in~(\ref{eq:Krxy}) makes the definitions~(\ref{eq:conddimDim}) appear arbitrary and ``brittle.''  The following theorem shows that this is not the case.
			
			\begin{thm}\label{thm:CondDimRobust}
				Let $s:\N\to\N$. If $|s(r)-r|=o(r)$, then, for all $x\in\R^m$ and $y\in\R^n$,
				\[\dim(x|y)=\liminf_{r\to\infty}\frac{K_{r,s(r)}(x|y)}{r}\,,\]
				and
				\[\Dim(x|y)=\limsup_{r\to\infty}\frac{K_{r,s(r)}(x|y)}{r}\,.\]
			\end{thm}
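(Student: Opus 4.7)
The plan is to combine the obvious monotonicity of $K_{r,s}(x|y)$ in $s$ with a Lipschitz-in-$s$ bound, and then pass to the limit. Since $B_{2^{-s_2}}(y)\subseteq B_{2^{-s_1}}(y)$ whenever $s_1\leq s_2$, the definition~(\ref{eq:Krsxy}) immediately yields $K_{r,s_1}(x|y)\geq K_{r,s_2}(x|y)$. The heart of the argument is the bound
\[K_{r,s}(x|y)\leq K_{r,s'}(x|y)+c\,|s-s'|+K(s)+K(s')+O(1)\]
for $s\leq s'$, with $c$ depending only on $n$. Granting this and applying it with $s=\min(s(r),r)$ and $s'=\max(s(r),r)$, we obtain $|K_{r,s(r)}(x|y)-K_r(x|y)|=o(r)$, using $|s(r)-r|=o(r)$, $K(r)=o(r)$, and $K(s(r))\leq K(r)+K(s(r)-r)+O(1)=o(r)$. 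Dividing by $r$ and taking $\liminf$ and $\limsup$ then gives the two identities in the theorem.

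To prove the Lipschitz bound I would fix a constant $c_n$ large enough that every point of $\R^n$ lies within distance $2^{-s'}$ of some point of the lattice $\Gamma_{s'}=(2^{-s'-c_n}\Z)^n$, and let $q'_*(y)\in\Gamma_{s'}$ denote the nearest lattice point to $y$; by construction $q'_*(y)\in B_{2^{-s'}}(y)$. For any $q\in\Q^n\cap B_{2^{-s}}(y)$ the triangle inequality gives $|q-q'_*(y)|<2^{-s}+2^{-s'}$, and a volume count shows that the number $N$ of lattice points of $\Gamma_{s'}$ within this distance of $q$ satisfies $\log N\leq c\,|s-s'|+O(1)$. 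Then I would define a self-delimiting oracle machine $M$ that, on input $(\pi,h,s,s')$ with auxiliary argument $q$, dovetails the computations $U(\pi,\cdot)$ over the lattice points of $\Gamma_{s'}$ within $2^{-s}+2^{-s'}$ of $q$, and outputs the result of the $h$-th halting computation. Choosing $\pi$ to be a minimal program witnessing $\hat{K}_r(x|q'_*(y))\leq K_{r,s'}(x|y)$ and $h$ to be the index at which $q'_*(y)$'s computation halts guarantees the output lies in $B_{2^{-r}}(x)$, and prefix-encoding the parameters yields the claimed bound since $\log h\leq \log N=c\,|s-s'|+O(1)$. Because this bound holds uniformly for every $q\in B_{2^{-s}}(y)$, it bounds the maximum defining $K_{r,s}(x|y)$.

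The principal obstacle is the Lipschitz step. The subtlety is that $M$ has access to $q$ but not to $y$, so it cannot compute $q'_*(y)$ directly; instead it must enumerate lattice candidates near $q$ and rely on the index $h$ to pinpoint the correct approximation. Making the counting $y$-uniform, so that the bound on $\log h$ is the same for every $q\in B_{2^{-s}}(y)$, is exactly what the fixed lattice $\Gamma_{s'}$ and the volume count accomplish; once this is done, the limit manipulations in the first paragraph are routine.
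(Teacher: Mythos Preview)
Your proposal is correct and follows essentially the same route as the paper: both arguments reduce the theorem to a linear-sensitivity bound for $K_{r,s}(x|y)$ in the parameter $s$ (the paper's Lemma~\ref{lem:LinSensCondKrs}), then sandwich $K_{r,r}(x|y)$ and $K_{r,s(r)}(x|y)$ between the values at $s^-(r)=\min\{r,s(r)\}$ and $s^+(r)=\max\{r,s(r)\}$ to conclude they differ by $o(r)$. The only cosmetic difference is in proving the sensitivity bound: the paper directly encodes the integer displacement vector $a$ from $q$ to a nearby higher-precision lattice point $q'$ (so $K(a)\leq n\Delta s + o(\Delta s)$), whereas you dovetail over the candidate lattice points and encode a halting index $h$ with $\log h\leq n|s'-s|+O(1)$; both yield the same bound and the same conclusion.
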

					
			The rest of this section is devoted to showing that our conditional dimensions have the correct information theoretic relationships with the previously developed dimensions and mutual dimensions.
			
			Mutual dimensions were developed very recently, and Kolmogorov complexity was the starting point. The \emph{mutual (algorithmic) information} between two strings $u,v \in \{0,1\}^*$ is
			\[I(u:v) = K(v) - K(v|u)\,.\]
			Again, routine coding extends $K(u|v)$ and $I(u:v)$ to other discrete domains.  Discussions of $K(u|v)$, $I(u:v)$, and the correspondence of $K(u)$, $K(u|v)$, and $I(u:v)$ with Shannon entropy, Shannon conditional entropy, and Shannon mutual information appear in~\cite{LiVit08}.
			
			In parallel with~(\ref{eq:Kr}) and~(\ref{eq:dimDim}), Case and J. H. Lutz~\cite{CasLut15} lifted the definition of $I(p:q)$ for rational points $p$ and $q$ in Euclidean spaces in two steps to define the mutual dimensions between two arbitrary points in (possibly distinct) Euclidean spaces.  First, for $x \in \R^m$, $y \in \R^n$, and $r \in \N$, the \emph{mutual information} between $x$ and $y$ at \emph{precision} $r$ is
			\begin{equation}\label{eq:Ir}
			I_r(x:y)=\min\left\{I(p:q)\,:\,p\in B_{2^{-r}}(x)\cap\Q^m\textrm{ and }q\in B_{2^{-r}}(y)\cap\Q^n\right\}\,,
			\end{equation}
			where $B_{2^{-r}}(x)$ and $B_{2^{-r}}(y)$ are the open balls of radius $2^{-r}$ about $x$ and $y$ in their respective Euclidean spaces.  Second, for $x \in \R^m$ and $y \in \R^n$, the \emph{lower} and \emph{upper mutual dimensions} between $x$ and $y$ are
			\begin{equation}\label{eq:mdimMdim}
			\mdim(x:y)=\liminf_{r\to\infty}\frac{I_r(x:y)}{r}\quad \textrm{and}\quad\Mdim(x:y)=\limsup_{r\to\infty}\frac{I_r(x:y)}{r}\,,
			\end{equation}
			respectively.  Useful properties of these mutual dimensions, especially including data processing inequalities, appear in~\cite{CasLut15}.
			
			\begin{lem}\label{lem:MutualInfo}
				For all $x\in\R^m$ and $y\in\R^n$,
				\[I_r(x:y)=K_r(x)-K_r(x|y)+o(r)\,.\]
			\end{lem}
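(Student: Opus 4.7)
The plan is to reduce the identity to a symmetric form via Theorem~\ref{thm:KrChainRule} and then to prove each direction using the classical symmetry of information for rationals. By Theorem~\ref{thm:KrChainRule}, $K_r(x|y)=K_r(x,y)-K_r(y)+o(r)$, so
\[K_r(x)-K_r(x|y)=K_r(x)+K_r(y)-K_r(x,y)+o(r),\]
and it suffices to prove the symmetric identity
\[I_r(x:y)=K_r(x)+K_r(y)-K_r(x,y)+o(r).\]

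The main classical tool is symmetry of information for rationals: for any $p\in\Q^m$ and $q\in\Q^n$ with $K(p),K(q),K(p,q)=O(r)$,
\[I(p:q)=K(p)+K(q)-K(p,q)+O(\log r).\]
A second ingredient is a \emph{precision-shift bound}: $K_{r-c}(z)\geq K_r(z)-O(cn)$ for $z\in\R^n$ and $c\in\N$, since any ball of radius $2^{-(r-c)}$ is covered by $O(2^{cn})$ balls of radius $2^{-r}$, so any $2^{-(r-c)}$-approximation refines to a $2^{-r}$-approximation with $O(cn)$ additional bits. In particular, $K_{r-1}(z)\geq K_r(z)-O(1)$.

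For the $(\leq)$ direction, I take $p^\#\in\Q^m\cap B_{2^{-r}}(x)$ and $q^\#\in\Q^n\cap B_{2^{-r}}(y)$ witnessing $K_r(x)$ and $K_r(y)$, respectively. Since $(p^\#,q^\#)$ lies within $\sqrt{2}\cdot 2^{-r}<2^{-(r-1)}$ of $(x,y)$ in $\R^{m+n}$, the precision-shift bound yields $K(p^\#,q^\#)\geq K_{r-1}(x,y)\geq K_r(x,y)-O(1)$, and rational symmetry of information then gives
\[I_r(x:y)\leq I(p^\#:q^\#)=K_r(x)+K_r(y)-K(p^\#,q^\#)+O(\log r)\leq K_r(x)+K_r(y)-K_r(x,y)+o(r).\]

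The $(\geq)$ direction is the main obstacle. For an arbitrary admissible pair $(p,q)$, rational symmetry together with the trivial bounds $K(p)\geq K_r(x)$ and $K(q)\geq K_r(y)$ yields $I(p:q)\geq K_r(x)+K_r(y)-K(p,q)-O(\log r)$, but $K(p,q)$ cannot be uniformly bounded above by $K_r(x,y)+o(r)$ for every such pair. My approach is to restrict attention to a pair $(p^*,q^*)$ achieving $I_r(x:y)$ and argue, via the rational chain rule $K(p^*,q^*)=K(p^*)+K(q^*\mid p^*)+O(\log r)$ together with Lemma~\ref{lem:relcond}, that any $I$-minimizer is essentially a jointly optimal witness for $K_r(x,y)$, so that $K(p^*,q^*)=K_r(x,y)+o(r)$. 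Making this ``substitution'' step rigorous---showing that excess of $K(p^*,q^*)$ over $K_r(x,y)$ must be reflected in excess of $K(p^*)+K(q^*)$ over $K_r(x)+K_r(y)$---is the delicate technical core of the proof and is expected to use the same precision-robustness machinery that underlies Theorem~\ref{thm:KrChainRule}.
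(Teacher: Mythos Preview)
Your opening reduction via Theorem~\ref{thm:KrChainRule} is circular in this paper's logic: the appendix proof of Theorem~\ref{thm:KrChainRule} consists precisely of combining Lemma~\ref{lem:MutualInfo} with Theorem~4.10 of~\cite{CasLut15}, and Theorem~4.10 of~\cite{CasLut15} \emph{is} the symmetric identity $I_r(x:y)=K_r(x)+K_r(y)-K_r(x,y)+o(r)$ that you then set out to prove from scratch. So your first step assumes the lemma you are proving, and the body of your proposal re-derives a result that is already available in~\cite{CasLut15}. Put differently, you have the dependency backwards: in this paper the chain rule is a corollary of Lemma~\ref{lem:MutualInfo} together with the (known) symmetric identity, not a tool one can cite in the proof of Lemma~\ref{lem:MutualInfo}.

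Even setting the circularity aside, your $(\geq)$ argument for the symmetric identity is only a plan with an explicitly unresolved step: you need that an $I$-minimizing pair $(p^*,q^*)$ satisfies $K(p^*,q^*)=K_r(x,y)+o(r)$, and neither the rational chain rule nor Lemma~\ref{lem:relcond} delivers this; the ``precision-robustness machinery that underlies Theorem~\ref{thm:KrChainRule}'' that you invoke is, as noted, Lemma~\ref{lem:MutualInfo} itself. The paper's proof avoids all of this by working directly from the definitions: it uses Theorem~4.6 of~\cite{CasLut15} to obtain $K$-minimizers $p_0\in B_{2^{-r}}(x)$ and $q_0\in B_{2^{-r}}(y)$ that also realize $I_r(x:y)$ up to $o(r)$, and then bounds $K_r(x)-K_r(x|y)$ above and below by $I(p_0:q_0)+o(r)$ using Lemmas~4.2 and~4.5 and Corollary~4.4 of~\cite{CasLut15}, never touching the chain rule.
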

			The following bounds on mutual dimension follow from Lemma~\ref{lem:MutualInfo}.
			\begin{thm}
				For all $x\in\R^m$ and $y\in\R^n$, the following hold.
				\begin{enumerate}
					\item $\mdim(x:y)\geq \dim(x)-\Dim(x|y)$.
					\item $\Mdim(x:y)\leq \Dim(x)-\dim(x|y)$.
				\end{enumerate}
			\end{thm}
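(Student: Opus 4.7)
The plan is to derive both inequalities as immediate consequences of Lemma~\ref{lem:MutualInfo} together with the standard liminf/limsup arithmetic rules $\liminf_r(a_r-b_r)\geq\liminf_r a_r-\limsup_r b_r$ and $\limsup_r(a_r-b_r)\leq\limsup_r a_r-\liminf_r b_r$, valid whenever the right-hand sides are not of the form $\infty-\infty$.

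Concretely, I would first divide the identity from Lemma~\ref{lem:MutualInfo} by $r$, noting that the error term becomes $o(1)$:
\[
\frac{I_r(x:y)}{r}=\frac{K_r(x)}{r}-\frac{K_r(x|y)}{r}+o(1).
\]
Since $0\le K_r(x)\le rm+O(\log r)$ and $0\le K_r(x|y)\le rm+O(\log r)$, both sequences are bounded, so the arithmetic rules apply without issue. For part~(1), taking $\liminf_{r\to\infty}$ of both sides yields
\[
\mdim(x:y)=\liminf_{r\to\infty}\!\left(\tfrac{K_r(x)}{r}-\tfrac{K_r(x|y)}{r}\right)\geq\liminf_{r\to\infty}\tfrac{K_r(x)}{r}-\limsup_{r\to\infty}\tfrac{K_r(x|y)}{r}=\dim(x)-\Dim(x|y),
\]
using the definitions in~(\ref{eq:dimDim}) and~(\ref{eq:conddimDim}) to identify the two extremal limits. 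For part~(2), taking $\limsup_{r\to\infty}$ of both sides yields, by the dual inequality,
\[
\Mdim(x:y)\leq\limsup_{r\to\infty}\tfrac{K_r(x)}{r}-\liminf_{r\to\infty}\tfrac{K_r(x|y)}{r}=\Dim(x)-\dim(x|y).
\]

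Because the real content is encapsulated in Lemma~\ref{lem:MutualInfo}, there is no serious obstacle here; the only thing requiring care is ensuring that the additive $o(r)$ error produced by Lemma~\ref{lem:MutualInfo} is absorbed when dividing by $r$ and passing to the liminf/limsup, which it is since $o(r)/r\to 0$. Thus both assertions reduce to a clean application of the superadditivity of $\liminf$ under subtraction and the subadditivity of $\limsup$ under subtraction.
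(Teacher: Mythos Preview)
Your proof is correct and is exactly the argument the paper has in mind: the paper does not spell out a proof but simply states that the bounds ``follow from Lemma~\ref{lem:MutualInfo},'' and your derivation via the standard $\liminf$/$\limsup$ inequalities for differences is precisely how that lemma yields the two bounds. Your extra remark that $K_r(x)/r$ and $K_r(x|y)/r$ are bounded (so no $\infty-\infty$ issue arises) is a welcome bit of care that the paper leaves implicit.
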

			Our final theorem is easily derived from Theorem~\ref{thm:KrChainRule}.
			\begin{thm}\label{thm:DimChainRule}
				\textup{(Chain rule for dimension)}
				For all $x\in\R^m$ and $y\in\R^n$,
				\begin{align*}
				\dim(x)+\dim(y|x)&\leq\dim(x,y)\\
				&\leq\dim(x)+\Dim(y|x)\\
				&\leq\Dim(x,y)\\
				&\leq\Dim(x)+\Dim(y|x)\,.
				\end{align*}
			\end{thm}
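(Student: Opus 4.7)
The plan is to reduce the claim to elementary properties of liminf and limsup of real sequences using Theorem~\ref{thm:KrChainRule}. First I would apply that chain rule with the roles of $x$ and $y$ exchanged --- noting that $K_r(y,x) = K_r(x,y) + O(1)$, since an encoding of $(y,x)$ yields one of $(x,y)$ by a coordinate permutation --- to obtain
\[K_r(x,y) = K_r(x) + K_r(y|x) + o(r)\,.\]

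Setting $a_r = K_r(x)/r$ and $b_r = K_r(y|x)/r$, one has $K_r(x,y)/r = a_r + b_r + o(1)$, so the displayed chain in the theorem is equivalent to
\begin{align*}
\liminf_{r\to\infty} a_r + \liminf_{r\to\infty} b_r &\leq \liminf_{r\to\infty}(a_r + b_r)\\
&\leq \liminf_{r\to\infty} a_r + \limsup_{r\to\infty} b_r\\
&\leq \limsup_{r\to\infty}(a_r + b_r)\\
&\leq \limsup_{r\to\infty} a_r + \limsup_{r\to\infty} b_r\,.
\end{align*}
The outermost two inequalities are the familiar super- and subadditivity of $\liminf$ and $\limsup$. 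For the second inequality, I would extract a subsequence $(r_k)$ along which $a_{r_k}\to \liminf a_r$, pass to a further subsequence along which $b_{r_k}$ converges (necessarily to a limit at most $\limsup b_r$), and read off the desired bound. The third inequality is the mirror image, obtained by starting from a subsequence that witnesses $\limsup b_r$ and extracting a further subsequence on which $a_r$ converges. Since $0\leq K_r(x),K_r(y|x) = O(r)$, all four liminfs and limsups are finite, so no $\infty-\infty$ issues arise.

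The theorem truly has no substantive obstacle, which is why the paper calls it ``easily derived'' from the chain rule for $K_r$. The only care needed is the bookkeeping: absorbing the $o(r)$ error from the chain rule after dividing by $r$ (it contributes $o(1)$ and so vanishes in the limit superior and limit inferior), and checking that the swap of arguments in Theorem~\ref{thm:KrChainRule} costs only $O(1)$ in $K_r$ and hence nothing in the limit.
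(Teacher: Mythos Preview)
Your proposal is correct and is exactly the derivation the paper has in mind: the paper does not spell out a proof but simply states that the theorem is ``easily derived from Theorem~\ref{thm:KrChainRule},'' and your argument---swap the roles of $x$ and $y$ in the chain rule (at cost $O(1)$), divide by $r$, and apply the standard sub-/super-additivity inequalities for $\liminf$ and $\limsup$---is precisely that derivation. The boundedness check you include to justify the subsequence extractions is the only detail worth noting, and you handle it.
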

	\section{Conclusion}\label{sec:conc}
		This paper shows a new way in which theoretical computer science can be used to answer questions that may appear unrelated to computation. We are hopeful that our new proof of Davies's theorem will open the way for using constructive fractal dimensions to make new progress in geometric measure theory, and that conditional dimensions will be a useful component of the information theoretic apparatus for studying dimension. 
		\subsection*{Acknowledgments}
			We thank Eric Allender for useful corrections and three anonymous reviewers of an earlier version of this work for helpful input regarding presentation.
		\bibliography{aipkscd}
		\clearpage
		\pagenumbering{arabic}
		\renewcommand*{\thepage}{A\arabic{page}}
		\appendix
		\section{Appendix}\label{app:main}
			\renewcommand\thethm{\thesection.\arabic{thm}}
			\setcounter{thm}{0}
			\subsection{Packing dimension}
				Let $E\subseteq\R^n$. For $\delta>0$, define $\mathcal{V}_\delta(E)$ to be the collection of all countable \emph{packings} of $E$ by disjoint open balls of diameter at most $\delta$. That is, for every packing $\{V_i\}_{i\in\N}\in\mathcal{V}_\delta(E)$ and every $i\in\N$, we have $V_i=B_{\ve_i}(x_i)\subseteq E$ for some $x_i\in E$ and $\ve_i\in[0,\delta/2]$.
				
				For $s\geq0$, define
				\[P_\delta^s(E)=\sup\bigg\{\sum_{i\in\N}\left|V_i\right|^s\,:\,\{V_i\}_{i\in\N}\in\mathcal{V}_\delta(E)\bigg\}\,,\]
				and let
				\[P_0^s(E)=\lim_{\delta\to0^+}P^s_\delta(E)\,.\]
				Then the \emph{$s$-dimensional packing measure of $E$} is
				\[P^s(E)=\inf\bigg\{\sum_{i\in\N} P_0^s(E_i)\,:\,E\subseteq\bigcup_{i\in\N}E_i\bigg\}\,,\]
				and the \emph{packing dimension of $E$} is
				\[\dim_P(E)=\inf\left\{s:P^s(E)=0\right\}\,.\]
				{
					\renewcommand{\thethm}{\ref{thm:packing}}
					\begin{thm}
						\textup{(Point-to-set principle for packing dimension)} For every set $E\subseteq\mathbb{R}^n$,
						\[\dim_P(E)=\adjustlimits\min_{A\subseteq\N}\sup_{x\in E}\,\Dim^A(x)\,.\]
					\end{thm}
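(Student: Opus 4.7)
The proof should follow the template of Theorem~\ref{thm:hausdorff}, with covers replaced by packings and liminfs replaced by limsups.

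For the $(\leq)$ direction, fix any oracle $A$ with $\sup_{x\in E}\Dim^A(x)\leq d'$ and rationals $d'<s<s'$. Set $E_k=\{x\in E\,:\,K^A_r(x)\leq rs\text{ for every }r\geq k\}$; the limsup definition of $\Dim^A$ yields $E=\bigcup_{k\in\N}E_k$. For every $r\geq k$, each $x\in E_k$ lies within $2^{-r}$ of a rational $q$ with $K^A(q)\leq rs$, and there are at most $2^{rs+1}$ such $q$. To bound $P_0^{s'}(E_k)$, take any packing $\{V_i\}$ of $E_k$ with $|V_i|\leq 2^{-t}$ and partition the indices by $r_i:=\lceil-\log_2|V_i|\rceil\geq t$; disjointness forces centers in a common bucket $r$ to be $2^{-r-1}$-separated, and each of the $2^{rs+1}$ radius-$2^{-r}$ covering balls contains only $O(1)$ of them (with the implied constant depending on $n$), so bucket $r$ has at most $O(2^{rs})$ indices. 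Hence $\sum_i|V_i|^{s'}=O(2^{-t(s'-s)})\to 0$, so $P_0^{s'}(E_k)=0$, $P^{s'}(E)\leq\sum_k P_0^{s'}(E_k)=0$, and $\dim_P(E)\leq s'$. Since $s'>d'$ was an arbitrary rational, this gives $\dim_P(E)\leq d'$.

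For the $(\geq)$ direction I would construct a single oracle $A$ with $\Dim^A(x)\leq d:=\dim_P(E)$ for every $x\in E$. For each rational $s>d$, $P^s(E)=0$ yields a countable decomposition $E\subseteq\bigcup_i E^s_i$ with each $P_0^s(E^s_i)<\infty$, and this finiteness bounds the cardinality of every $2^{-t}$-separated subset of $E^s_i$ by $O(2^{ts})$ for $t$ sufficiently large (since $N$ such points induce a packing of $s$-weight at least a constant times $N\cdot 2^{-ts}$). Into $A$, encode for every $(i,t,s)\in\N^2\times(\Q\cap(d,\infty))$ an indexed list of rational $2^{-t-O(1)}$-approximations to one \emph{maximal} $2^{-t}$-separated subset $S_{i,t,s}$ of $E^s_i$. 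Given $x\in E$ and rational $s>d$, fix $i$ with $x\in E^s_i$; maximality produces a $y\in S_{i,t,s}$ with $|y-x|<2^{-t}$ at every sufficiently large $t$, and specifying its index in $S_{i,t,s}$ along with $(i,t,s)$ takes $ts+o(t)$ bits (using Observation~\ref{obs:KBoundN} to keep $K(i)$, $K(t)$, $K(s)$ at $o(t)$). Thus $K^A_t(x)\leq ts+o(t)$ for every large $t$, so $\Dim^A(x)\leq s$ for every rational $s>d$, whence $\Dim^A(x)\leq d$.

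The main obstacle is the switch from liminf to limsup. The covering-based oracle of Theorem~\ref{thm:hausdorff} only needs to force $K^A_r(x)$ small at infinitely many $r$, whereas here it must do so at \emph{every} sufficiently large $r$. Finite $P_0^s(E^s_i)$ paired with \emph{maximal} $2^{-t}$-separated sets is exactly what supplies this: finiteness uniformly bounds the separated-set cardinality at all small scales, and maximality guarantees that each $x\in E^s_i$ has a nearby $S_{i,t,s}$-representative at every scale. Bundling the countable family $\{S_{i,t,s}\}$ into a single oracle and recovering a particular entry with a short decoder program is routine bookkeeping but forms the technical core of the construction.
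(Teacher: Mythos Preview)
Your proposal is correct and follows essentially the same route as the paper. In both directions the paper does exactly what you sketch: for $(\leq)$ it decomposes $E$ into the sets $E_i=\bigcap_{k\geq i}\{x:K^A_r(x)\leq rs\}$ and bounds $P_0^{s'}(E_i)$ by counting low-complexity rationals in each diameter bucket (the paper's counting is marginally slicker, observing that the nearby rational $q$ actually lies \emph{inside} the packing ball so disjointness makes the $q$'s distinct), and for $(\geq)$ it encodes into $A$ a maximal packing of each $E_j^s$ by balls of fixed radius $2^{-r-2}$ at every scale $r$, which is exactly your maximal $2^{-t}$-separated set reformulated.
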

					\addtocounter{thm}{-1}
				}
				\begin{proof}
					Let $E\subseteq\R^n$, and let $d=\dim_P(E)$. For every $s>d$ we have $P^s(E)=0$, so there is a cover $\big\{E^{s}_j\big\}_{j\in\N}$ for $E$ such that
					\begin{equation}\label{eq:thm:packing:1}
					\sum_{j\in\N}\lim_{\delta\to 0^+}P_\delta^s(E^{s}_j)<1\,.
					\end{equation}
					For every $r,j\in\N$, let
					\[\big\{V_i^{r,s,j}\big\}_{i\in\N}\in\mathcal{V}_{2^{-r-2}}(E_j^s)\]
					be a maximal packing of $E_j^s$ by open balls of radius exactly $2^{-r-2}$ (and higher-indexed balls of radius $0$).

					Let $D=\N^3\times(\Q\cap(d,\infty))$. Our oracle $A$ encodes a function $f_A:D\to\Q^n$
					such that for every $(i,j,r,s)\in D$ we have
					\[f_A(i,j,r,s)\in V_i^{r,s,j}\,.\]
					We will show, for every $x\in E$ and rational $s>d$, that $\Dim^A(x)\leq s$.
					
					Let $M$ be a self-delimiting Turing machine with oracle access to $A$ such that, whenever $U(\iota)=i\in\N$, $U(\kappa)=j\in\N$, $U(\rho)=r\in\N$, and $U(\sigma)=q\in\Q\cap(d,\infty)$,
					\[M(\iota\kappa\rho\sigma)=f_A(i,j,r,s)\,.\]
					
					Fix $x\in E$ and $s\in\Q\cap(0,\infty)$, and let $k\in\N$ be such that $x\in E_{k}^{s}$.
					Notice that by our choice of packing, for every $r\in\N$ there must be some $i_r\in\N$  such that
					\[V_{i_r}^{r,s,k}\subseteq B_{2^{-r}}(x)\,.\]
					Thus, for every $r\in\N$, letting $\iota$, $\kappa$, $\rho$, $\sigma$ testify to $K(i_r)$, $K(k)$, $K(r)$, and $K(s)$, respectively,
					\begin{align*}
					M(\iota\kappa\rho\sigma)&=f_A(i_r,k,r,s)\\
					&\in V_{i_r}^{r,s,k}\\
					&\subseteq B_{2^{-r}}(x)\,,
					\end{align*}
					hence $K_r^A(x)\leq K(i_r)+K(k)+K(r)+K(s)+c$, where $c$ is a machine constant for $M$.
					Because $k$ and $s$ are constant in $r$, $K(r)=o(r)$, and $K(i_r)\leq\log i_r+o(r)$, we have
					\[K_{r}^A(x)\leq \log i_r + o(r)\,.\]
								
					By~(\ref{eq:thm:packing:1}), $\lim_{\delta\to 0^+}P^s_\delta(E_k^s)<1$, so there is some $R\in\N$ such that, for every $r>R$, $P^s_{2^{-r}}(E_k^s)<1$. Then for every $r>R$,
					\[\sum_{i\in\N}\big|V_i^{r,s,k}\big|^s<1\,,\]
					hence there are fewer than $2^{(r+2)s}$ balls of radius $2^{-r-2}$ in the packing, and $\log i_r<(r+2)s$. We conclude that $K_r^A(x)\leq rs+o(r)$ for every $r>R$, so
					\[\dim^A(x)=\limsup_{r\to\infty}\frac{K_r^A(x)}{r}\leq s\,.\]
					Since this holds for every rational $s>d$, we have shown $\Dim^A(x)\leq d$ and thus \[\adjustlimits\min_{A\subseteq\N}\sup_{x\in E}\,\Dim^A(x)\leq d\,.\]

					For the other direction, assume for contradiction that there is some oracle $A$ and $d^\prime<d$ such that
					\[\sup_{x\in E}\,\Dim^A(x)= d^\prime\,.\]
					Then for every $x\in E$, $\Dim^A(x)\leq d^\prime$.  Let $s\in(d^\prime,d)$. For every $k\in\N$, define the set
					\[C_k=\bigcup\left\{B_{2^{-k}}(q)\,:\,q\in\Q\textrm{ and }K^A(q)\leq ks\right\}\,,\]
					and for every $i\in\N$, define
					\[E_i=\bigcap_{k=i}^\infty C_k\,.\]
					
					For $r\geq i$, consider any packing in
					$\mathcal{V}_{2^{-r}}\left(E_i\right)$. Let $B_\ve(x)$ be an element of the packing, and let $k=\lceil-\log\ve\rceil$. Then $k\geq r+1>i$, so $B_\ve(x)\subseteq E_i\subseteq C_k$. In particular $x\in C_k$, meaning that there is some $q\in\Q$ such that $K^A(q)\leq ks$ and $x\in B_{2^{-k}}(q)$. As $2^{-k}\leq\ve$, we also have $q\in B_\ve(x)$. Thus, every packing element of radius at least $2^{-k}$ contains a (distinct) member of the set $\{q\in\Q:K^A(q)\leq ks\}$. It follows that for every $k\geq r+1$, the packing includes at most $2^{ks+1}$ elements with diameters in the range $[2^{1-k},2^{2-k})$.
					
					Now let $s'\in(s,d)$. For every $i\in\N$ and $r\geq i$, we have
					\begin{align*}
					P_{2^{-r}}^{s'}(E_i)&=\sup\bigg\{\sum_{j\in\N}\left|V_j\right|^{s'}\,:\,\{V_j\}_{j\in\N}\in\mathcal{V}_{2^{-r}}(E_i)\bigg\}\\
					&\leq\sum_{k=r+1}^\infty 2^{ks+1}(2^{2-k})^{s'}\\
					&=2^{1+2s'}\cdot\sum_{k=r+1}^\infty 2^{(s-s')k}\,.
					\end{align*}
					This approaches $0$ as $r\to\infty$, so $P_0^{s'}(E_i)=0$. Observe now that
					\[E\subseteq\bigcup_{i\in\N}E_i\,.\]
					Thus,
					\[P^{s'}(E)\leq\sum_{i\in\N}P^{s'}_0(E_i)=0\,,\]
					meaning that $\dim_P(E)\leq s^\prime<d$, a contradiction. We conclude that for every oracle $A$, \[\sup_{x\in E}\,\Dim^A(x)\geq d\,.\]
				\end{proof}
			\subsection{Chain rule for $K_r$}
				{
					\renewcommand{\thethm}{\ref{thm:KrChainRule}}
					\begin{thm}
						For all $x\in\R^m$, $y\in\R^n$, and $r\in\N$,
						\[K_r(x,y)=K_r(x|y) + K_r(y)+o(r)\,.\]
					\end{thm}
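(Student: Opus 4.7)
My plan is to prove the equality by establishing both inequalities separately. The upper bound $K_r(x,y) \leq K_r(y) + K_r(x|y) + o(r)$ arises from a straightforward concatenation of witnesses, while the reverse inequality requires a counting argument driven by Levin's classical symmetry of information for discrete objects.

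For the upper bound, I would fix a shortest program $\pi_y$ of length $K_r(y)$ with $U(\pi_y) = q \in \Q^n \cap B_{2^{-r}}(y)$. Since $K_r(x|y) = K_{r,r}(x|y)$ is the maximum of $\hat{K}_r(x|q')$ over $q' \in \Q^n \cap B_{2^{-r}}(y)$, for this particular $q$ there exists a program $\pi_{x|q}$ of length at most $K_r(x|y)$ satisfying $U(\pi_{x|q}, q) \in \Q^m \cap B_{2^{-r}}(x)$. A self-delimiting machine on the prefix-coded concatenation $\pi_y \pi_{x|q}$ recovers $q$ by simulating $\pi_y$, then $p$ by simulating $\pi_{x|q}$ on input $q$, and outputs $(p,q)$. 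To handle the $\sqrt{2}$ factor that appears when combining coordinate-wise approximations into a joint one, I would carry out the construction at precision $r + c$ for a small constant $c$ and absorb both the constant and the $O(\log r)$ self-delimiting overhead into $o(r)$.

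For the reverse inequality, I would prove the stronger uniform bound $\hat{K}_r(x|q) \leq K_r(x,y) - K_r(y) + o(r)$ for every $q \in \Q^n \cap B_{2^{-r}}(y)$; taking the maximum over $q$ then yields the result. Let $(p^*, q^*)$ realize $K(p^*,q^*) = K_r(x,y)$ with $p^* \in B_{2^{-r}}(x)$ and $q^* \in B_{2^{-r}}(y)$. The conditional program I would construct, on input $q$ together with $O(\log r)$-bit auxiliary data encoding $r$, $K_r(x,y)$, $K_r(y)$, and an index $i$, dovetails over all rational pairs output by programs of length at most $K_r(x,y)$, filters to those $(p', q')$ satisfying $|q' - q| < 2 \cdot 2^{-r}$, and returns the first coordinate of the $i$-th survivor. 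Because both $q^*$ and $q$ lie in $B_{2^{-r}}(y)$, the pair $(p^*, q^*)$ always survives the filter, so some index $i^*(q)$ makes the program output $p^* \in B_{2^{-r}}(x)$, giving $\hat{K}_r(x|q) \leq \log i^*(q) + o(r)$.

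The main obstacle is the counting bound $\log i^*(q) \leq K_r(x,y) - K_r(y) + o(r)$, equivalently that at most $2^{K_r(x,y) - K_r(y) + o(r)}$ pairs survive the filter. Each surviving $q'$ lies in $B_{3 \cdot 2^{-r}}(y)$, so $K(q') \geq K_r(y) - O(1)$ by comparing $K_r$ at nearby precisions (each additional bit of precision adds at most a constant to $K_r$ in a fixed ambient dimension); classical symmetry of information then gives $K(p'|q') \leq K_r(x,y) - K_r(y) + O(\log r)$, capping the number of $p'$ per $q'$ by $2^{K_r(x,y) - K_r(y) + o(r)}$. Bounding the number of $q'$ at each complexity level $c$ is the subtle step: taking $q_*$ to be the lex-smallest rational in $B_{2^{-r}}(y)$ of complexity $K_r(y)$, and arguing that such a canonical anchor is recoverable from any $q'$ in the ball using $O(\log r)$ bits (via dovetailed search through $B_{4 \cdot 2^{-r}}(q')$ with $K_r(y)$ supplied as side information), the classical chain rule forces $K(q'|q_*) \leq c - K_r(y) + O(\log r)$ and hence at most $2^{c - K_r(y) + O(\log r)}$ such $q'$. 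Summing over $c \in [K_r(y), K_r(x,y)]$ yields the required total bound. The delicate part is rigorously justifying uniform recoverability of such an anchor across all choices of input $q$, for which I would rely on a careful canonical definition in terms of lex order and pass $K_r(y)$ as $O(\log r)$-bit side information.
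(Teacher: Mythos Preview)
Your route is genuinely different from the paper's. The paper does not attack the chain rule directly at all: it simply combines two identities about mutual information at precision $r$, namely $I_r(x:y)=K_r(x)+K_r(y)-K_r(x,y)+o(r)$ (Theorem~4.10 of~\cite{CasLut15}) and $I_r(x:y)=K_r(x)-K_r(x|y)+o(r)$ (Lemma~\ref{lem:MutualInfo} here), and subtracts. All of the real work---the analysis of $K$-minimizers in balls---is thereby delegated to~\cite{CasLut15}. Your approach is a self-contained direct argument via the discrete symmetry of information, which is attractive precisely because it avoids that external machinery; but it forces you to redo, in your counting step, the part of the $K$-minimizer analysis that the paper simply quotes.

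Your upper bound is fine. In the lower bound, however, the step you yourself flag as ``delicate'' is actually broken as written. You define $q_*$ as the lex-smallest rational in $B_{2^{-r}}(y)$ of complexity $K_r(y)$ and propose to recover it from an arbitrary $q'\in B_{3\cdot 2^{-r}}(y)$ by dovetailing over rationals of complexity $\le K_r(y)$ inside $B_{4\cdot 2^{-r}}(q')$. Two things fail. First, the lex-minimum of a set that is only computably enumerable is not computable: you can never certify that no smaller element remains to appear. Second, and more seriously, the search ball $B_{4\cdot 2^{-r}}(q')$ strictly contains $B_{2^{-r}}(y)$ and may well contain rationals of complexity $\le K_r(y)$ lying \emph{outside} $B_{2^{-r}}(y)$; whatever canonical choice you make inside the larger ball will then depend on $q'$. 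Once the anchor varies with $q'$, your count collapses: summing over the $\le 2^{K_r(y)+1}$ possible anchors multiplies your bound on $q'$ by $2^{K_r(y)}$, which exactly cancels the saving you were after.

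The fix is to replace the complexity-based anchor with a geometric one. Let $q_0=2^{-r}\lfloor 2^r y\rfloor$ (coordinatewise). This $q_0$ is \emph{the same for every} $q'$, lies within $\sqrt{n}\cdot 2^{-r}$ of $y$, and is recoverable from any $q'\in B_{3\cdot 2^{-r}}(y)$ together with $r$ using only $O_n(1)$ further bits, since there are only $O_n(1)$ grid points at scale $2^{-r}$ within distance $(3+\sqrt n)2^{-r}$ of $q'$. Moreover $K(q_0)\ge K_{r-O(1)}(y)\ge K_r(y)-O(\log r)$ by the linear sensitivity of $K_r$ in $r$ (Lemma~\ref{lem:LinSensCondKr} with trivial conditioning, or Corollary~3.9 of~\cite{CasLut15}). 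With this fixed anchor, your symmetry-of-information bound $K(q'\mid q_0,K(q_0))\le c-K_r(y)+O(\log r)$ goes through, and the rest of your counting argument is correct.
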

					\addtocounter{thm}{-1}
				}
				\begin{proof}
					Theorem 4.10 of~\cite{CasLut15} tells us that
					\[I_r(x:y)=K_r(x)+K_r(y)-K_r(x,y)+o(r)\,.\]
					Combining this with Lemma~\ref{lem:MutualInfo}, we have
					\[K_r(x)+K_r(y)-K_r(x,y)+o(r)=K_r(x)-K_r(x|y)+o(r)\,.\]
					The theorem follows immediately.
				\end{proof}

			\subsection{Proof of Lemma~\ref{lem:relcond}}
			{
				\renewcommand{\thethm}{\ref{lem:relcond}}
				\begin{lem}
					For each $m,n\in\N$ there is a constant $c\in\N$ such that, for all $x\in\R^m$, $y\in\R^n$, and $r,s\in\N$,
					\[K_r^y(x)\leq K_{r,s}(x|y)+K(s)+c\,.\]
					In particular, $K_r^y(x)\leq K_r(x|y)+K(r)+c$.
				\end{lem}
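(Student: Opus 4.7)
The plan is to construct a single oracle prefix Turing machine $M$ that converts the hypothetical short program witnessing $K_{r,s}(x|y)$ into a program witnessing $K_r^y(x)$, paying only the extra $K(s)$ bits needed to tell $M$ which precision of $y$ to read from its oracle. Concretely, define $M$ so that on oracle $A_y$ and input $\sigma\pi$, it first runs $U$ on $\sigma$ to obtain $s := U(\sigma)$, then uses $A_y$ to read sufficiently many bits of each coordinate of $y$ to produce some specific rational $q_\star \in \Q^n \cap B_{2^{-s}}(y)$ (the procedure being deterministic, so that $q_\star$ is a function of $y$ and $s$ alone), and finally outputs $U(\pi,q_\star)$. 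Since $\sigma$ and $\pi$ are both self-delimiting $U$-programs, $M$ is a prefix machine, and the invariance theorem for relativized prefix complexity yields a constant $c$, depending only on the description of $M$ (hence on $m,n$), such that $K_r^y(x) \le |\sigma\pi| + c$ whenever $M^{A_y}(\sigma\pi) \in B_{2^{-r}}(x)$.

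Now, given $x,y,r,s$, I would let $\sigma$ be a shortest $U$-program for $s$, so $|\sigma|=K(s)$; this fixes the rational $q_\star \in \Q^n \cap B_{2^{-s}}(y)$ that $M^{A_y}$ will compute. The key observation, and the whole reason the bound uses $K_{r,s}$ rather than an infimum, is that
\[
K_{r,s}(x|y)=\max\{\hat K_r(x|q) : q\in\Q^n\cap B_{2^{-s}}(y)\}
\]
is a \emph{maximum} over all rationals near $y$, so regardless of which specific $q_\star$ our oracle-reading procedure lands on, we still have $\hat K_r(x|q_\star) \le K_{r,s}(x|y)$. Thus I can pick a witnessing $\pi$ with $|\pi|\le K_{r,s}(x|y)$ and $U(\pi,q_\star)\in\Q^m\cap B_{2^{-r}}(x)$. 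Then $M^{A_y}(\sigma\pi)=U(\pi,q_\star)\in B_{2^{-r}}(x)$, giving $K_r^y(x) \le K(s)+K_{r,s}(x|y)+c$. The ``in particular'' clause is the special case $s=r$ combined with the definition $K_r(x|y)=K_{r,r}(x|y)$.

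There is no real obstacle here; the one point that requires care rather than cleverness is making sure the quantification works correctly: $M$ must commit to a deterministic choice of $q_\star$ as a function of $(y,s)$ before we invoke the definition of $K_{r,s}$, so that the ``max'' in the definition is what allows us to bound $\hat K_r(x|q_\star)$. The independence of the constant $c$ from $r,s,x,y$ follows because $M$ itself is a fixed machine whose description does not mention any of these parameters.
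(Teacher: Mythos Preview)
Your proposal is correct and is essentially the paper's own argument: construct an oracle machine that reads a self-delimiting program for $s$, uses the oracle $A_y$ to extract a fixed rational $q_\star\in\Q^n\cap B_{2^{-s}}(y)$, then simulates $U(\pi,q_\star)$; the max in the definition of $K_{r,s}(x|y)$ is exactly what lets you bound $\hat K_r(x|q_\star)$. The only difference is cosmetic: the paper names the specific choice $q_\star=y\upharpoonright(s+\log\sqrt{n})$, whereas you leave it as any deterministic truncation procedure.
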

				\addtocounter{thm}{-1}
			}
			\begin{proof}
				Let $m,n\in\N$, and let $U$ be the optimal Turing machine fixed for the definition of conditional Kolmogorov complexity. Let $M$ be an oracle Truing machine that, on input $\pi\in\{0,1\}^*$ with oracle $g:\N\to\Q^n$, does the following. If $\pi$ is of the form $\pi=\pi_1\pi_2$, where $U(\pi_1,\lambda)=t\in\N$, then $M$ simulates $U(\pi_2,g(t))$. Let $c$ be an optimality constant for the oracle Turing machine $M$.
				
				To see that $c$ affirms the lemma, let $x\in\R^m$, $y\in\R^n$, and $r,s\in\N$. Let $q=y\upharpoonright (s+\log\sqrt{n})$, the truncation of the binary expansions of each of $y$'s coordinates to $s+\log\sqrt{n}$ bits to the right of the binary point. Let $\pi_s\in\{0,1\}^*$ testify to the value of $K(s)$, and let $\pi_x$ testify to the value of $\hat{K}_r(x|q)$. Then
				\[q\in\Q^n\cap B_{2^{-s}}(y)\]
				and
				\[M^{y}(\pi_s\pi_x)=U(\pi_x,q)\in\Q^m\cap B_{2^{-r}}(x)\,,\]
				so
				\begin{align*}
					K_r^y(x)&\leq K_{M,r}^y(x)+c\\
					&\leq |\pi_s\pi_x|+c\\
					&=\hat{K}_r(x|q)+K(s)+c\\
					&\leq K_{r,s}(x|y)+K(s)+c\,.
				\end{align*}
			\end{proof}
			\subsection{Claims in proof of Lemma~\ref{lem:main}}
			{
				\renewcommand{\thecla}{\ref{cla:floor}}
				\begin{cla}
					For every $r\in\N$, $\hat{K}_r(b|m)=K_r(b|m_r)+o(r)$, where $m_r=2^{-r}\lfloor m\cdot2^r\rfloor$.
				\end{cla}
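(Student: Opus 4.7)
The plan is to unfold the definitions and prove both inequalities, interpreting the claim in its body form $K_r(b|m) = \hat{K}_r(b|m_r) + o(r)$ (the appendix statement swaps $K_r$ and $\hat{K}_r$, which does not type-check against \eqref{eq:Krxq}). Recall that $K_r(b|m) = \max\{\hat{K}_r(b|q) : q \in \Q \cap B_{2^{-r}}(m)\}$ by \eqref{eq:Krsxy}, and that $m - m_r \in [0, 2^{-r})$, so $m_r$ itself lies in $\Q \cap B_{2^{-r}}(m)$. The easy direction $\hat{K}_r(b|m_r) \le K_r(b|m)$ then follows immediately, since $m_r$ is a feasible argument for the maximum.

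For the reverse inequality $K_r(b|m) \le \hat{K}_r(b|m_r) + o(r)$, the strategy is to fix an arbitrary $q \in \Q \cap B_{2^{-r}}(m)$ and show $\hat{K}_r(b|q) \le \hat{K}_r(b|m_r) + o(r)$ with the error term uniform in $q$. The key observation is that $m_r$ can be reconstructed from any such $q$ (given $r$) using only an $O(1)$-bit hint. Indeed, $|q - m_r| \le |q - m| + |m - m_r| < 2^{1-r}$, so the rational $\tilde{m}(q) := 2^{-r}\lfloor q \cdot 2^r \rfloor$, computable from $q$ and $r$, satisfies $m_r - \tilde{m}(q) \in \{-2^{-r}, 0, 2^{-r}\}$; hence some offset $\Delta \in \{-1, 0, 1\}$ (depending on $q$ and $m$ but of $O(1)$ description length) yields $\tilde{m}(q) + \Delta \cdot 2^{-r} = m_r$.

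Construction: let $\pi$ witness $\hat{K}_r(b|m_r)$, so that $U(\pi, m_r) = p$ for some $p \in \Q \cap B_{2^{-r}}(b)$. I would build a self-delimiting oracle machine $M$ that, on program $\rho\,\delta\,\pi'$ with condition $q'$, parses $r = U(\rho)$ and $\Delta$ from the constant-length string $\delta$, computes $m' := 2^{-r}\lfloor q' \cdot 2^r \rfloor + \Delta \cdot 2^{-r}$, and then simulates $U(\pi', m')$. Taking $\rho$ to be a shortest program for $r$, $\delta$ to encode the correct offset $\Delta$ for the given $q$, and $\pi' = \pi$, the resulting input has total length $K(r) + O(1) + \hat{K}_r(b|m_r)$, and $M$ on condition $q$ outputs $p \in B_{2^{-r}}(b)$. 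Absorbing an optimality constant for $M$ relative to the universal $U$, and using $K(r) = o(r)$, one obtains $\hat{K}_r(b|q) \le \hat{K}_r(b|m_r) + o(r)$ uniformly in $q$, and taking the maximum over $q$ finishes the proof.

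The main bookkeeping issue is that the $o(r)$ term must be uniform in the choice of $q$; this is precisely what the $O(1)$ bound on the offset $\Delta$ secures, since $\Delta$ is the only $q$-dependent part of the input to $M$ and has bounded description length independent of $r$. The remaining details — self-delimiting parsing of $\rho\,\delta\,\pi'$, and the observation that $m' = m_r$ makes the $U$-simulation succeed — are routine.
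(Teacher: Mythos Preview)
Your proof is correct, and in fact more direct than the paper's. You argue pointwise: for each $q\in\Q\cap B_{2^{-r}}(m)$, the rounded value $m_r$ is recoverable from $(q,r)$ with an $O(1)$-bit offset, so any program computing $p\in B_{2^{-r}}(b)$ from $m_r$ can be converted into one computing $p$ from $q$ at an additive cost of $K(r)+O(1)$; taking the maximum over $q$ gives $K_r(b|m)\le\hat K_r(b|m_r)+o(r)$. The paper instead routes through joint complexities and the chain rule: it picks $\hat b$ witnessing $\hat K_r(b|m_r)$, observes that $(\hat b,m_r)\in B_{2^{1-r}}(b,m)$, so $K(\hat b,m_r)\ge K_r(b,m)+o(r)$; separately it shows $K(m_r)\le K_r(m)+o(r)$ (via essentially the same finite-offset observation you use); then combines the discrete chain rule $K(\hat b|m_r)=K(\hat b,m_r)-K(m_r)+o(r)$ with the Euclidean chain rule (Theorem~\ref{thm:KrChainRule}) $K_r(b,m)-K_r(m)=K_r(b|m)+o(r)$.

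Your approach is self-contained and avoids the dependence on Theorem~\ref{thm:KrChainRule} (which in the paper is derived from Lemma~\ref{lem:MutualInfo} and results of~\cite{CasLut15}). The paper's approach, by contrast, illustrates the chain-rule machinery at work. One minor remark: your bound $m_r-\tilde m(q)\in\{-2^{-r},0,2^{-r}\}$ is in fact correct (the open interval $(-2^{1-r},2^{1-r})$ excludes $\pm 2\cdot 2^{-r}$), though the paper's analogous step conservatively allows four possibilities; either way the advice is $O(1)$ bits.
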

				\addtocounter{cla}{-1}
			}
			\begin{proof}
				$K_r(b|m)\geq \hat{K}_r(b|m_r)$ by definition, since $m_r\in B_{2^{-r}}(m)$.
				
				Let $\hat{b}\in B_{2^{-r}}(b)$ be such that $K(\hat{b}|m_r)=\hat{K}_r(b|m_r)$. Then
				\[|(\hat{b},m_r)-(b,m)|\leq\sqrt{2}\cdot2^{-r}<2^{1-r}\,,\]
				so
				\[K(\hat{b},m_r)\geq K_{r-1}(b,m)=K_r(b,m)+o(r)\,,\]
				by Corollary 3.9 of~\cite{CasLut15}.
				
				Let $\mu$ testify to the value of $K_r(m)$, and let $\hat{m}=U(\mu)$. Then $|\hat{m}-m|<2^{-r}$, so $|\hat{m}-m_r|<2^{1-r}$. Thus once $\hat{m}$ and $r$ have been specified, there are at most four possible values for $m_r$. Therefore there is a self-delimiting Turing machine that takes as input $\mu$, an encoding of $r$ of length $o(r)$, and $O(1)$ additional bits and outputs $m_r$. We conclude that $K(m_r)\leq K_r(m)+o(r)$. Therefore we have
				\begin{align*}
				\hat{K}_r(b|m_r)&=K(\hat{b}|m_r)\\
				&=K(\hat{b},m_r)-K(m_r)+o(r)\\
				&\geq K_r(b,m)+o(r)-(K_r(m)+o(r))+o(r)\\
				&=K_r(b|m)+o(r)\,,
				\end{align*}
				by Theorem~\ref{thm:KrChainRule}.
			\end{proof}
			{
				\renewcommand{\thecla}{\ref{cla:ExistsH}}
				\begin{cla}
					For each $x\in[0,1]$ and $r\in\mathbb{N}$, there exists an $h\in\mathbb{N}$ such that $M$ halts on input $(\rho\pi_r\sigma_r\eta)$ with  $M(\rho\pi_r\sigma_r\eta)\in B_{2^{1-r}}(m,b,x)$, where $U(\rho)=r$ and $U(\eta)=h$.
				\end{cla}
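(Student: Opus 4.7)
The plan is to exhibit a specific index $i_0$ in the machine's parallel enumeration whose corresponding candidate points to the ``intended'' slope approximation $m_r = 2^{-r}\lfloor m\cdot 2^r\rfloor$, verify that this index passes the candidate test with strict inequality, and then let $h$ be the rank at which the algorithm tallies this candidate.

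Concretely, I would set $i_0 = \lfloor m\cdot 2^r\rfloor$, so that $u_{i_0} = m_r$. Because $\pi_r$ was chosen to witness $\hat{K}_r(b\mid m_r)$, running $U(\pi_r, m_r)$ halts with output $v_{i_0} = \hat{b}\in\Q\cap B_{2^{-r}}(b)$; similarly, because $\sigma_r$ witnesses $K_r(x, mx+b)$, we have $(p,q) = U(\sigma_r)\in\Q^2\cap B_{2^{-r}}(x, mx+b)$. These are exactly the ingredients the test will consume for $i=i_0$.

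The main calculation is the verification of $|u_{i_0}p + v_{i_0} - q| < 2^{2-r}$. Adding and subtracting $mx$ inside the absolute value and applying the triangle inequality bounds the left side by
\[|m_r|\,|p-x| \;+\; |x|\,|m_r - m| \;+\; |\hat b - b| \;+\; |q - (mx+b)|\,.\]
Each of the four summands is at most $2^{-r}$, using $|m_r|\le 1$ and $|x|\le 1$ (which follow from $m,x\in[0,1]$), and at least one is strict (those coming from the open balls in the definitions of $K_r$ and $\hat{K}_r$), so the sum is strictly less than $4\cdot 2^{-r} = 2^{2-r}$. Hence $i_0$ is genuinely a candidate, and I take $h\in\N$ to be the ordinal position at which $i_0$ is counted by the parallel loop. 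For this $h$, $M(\rho\pi_r\sigma_r\eta)$ halts and returns $(m_r, \hat b, p)$, which by one more application of the triangle inequality lies within Euclidean distance $\sqrt 3\cdot 2^{-r} < 2^{1-r}$ of $(m,b,x)$, as required.

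There is no substantial obstacle here; the argument is routine bookkeeping with triangle inequalities. The only care required is to keep strict inequalities where needed so that the algorithm's strict test ``$<2^{2-r}$'' is genuinely satisfied, which is ensured by the open balls present in the definitions of $\pi_r$ and $\sigma_r$. The hypothesis $x\in[0,1]$, together with $m\in[0,1]$ from the statement of Lemma~\ref{lem:main}, enters precisely to bound $|x|$ and $|m_r|$ by $1$ in the key estimate.
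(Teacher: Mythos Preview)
Your proposal is correct and follows essentially the same route as the paper: pick the index corresponding to $m_r=2^{-r}\lfloor m\cdot 2^r\rfloor$, use the definitions of $\pi_r$ and $\sigma_r$ to identify $v_{i_0}\in B_{2^{-r}}(b)$ and $(p,q)\in B_{2^{-r}}(x,mx+b)$, verify the candidate test $|u_{i_0}p+v_{i_0}-q|<2^{2-r}$ by a four-term triangle inequality using $|m_r|\le 1$ and $|x|\le 1$, and conclude that $(u_{i_0},v_{i_0},p)$ lies within $\sqrt 3\cdot 2^{-r}<2^{1-r}$ of $(m,b,x)$. Your decomposition of the key estimate is slightly different from the paper's chained triangle inequalities but equivalent, and you are in fact a bit more explicit than the paper in naming $i_0=\lfloor m\cdot 2^r\rfloor$ so that $u_{i_0}=m_r$ is exactly the input on which $\pi_r$ is guaranteed to halt.
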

				\addtocounter{cla}{-1}
			}
			\begin{proof}
				Fix $x\in[0,1]$ and $r\in\mathbb{N}$. It is clear that for some $j\in\{0,1,\ldots,2^r\}$, $|u_j-m|<2^{-r}$. By the definition of $K_r(b|m)$, $u_j\in\mathbb{Q}\cap B_{2^{-r}}(m)$ implies that $U(\pi_r,u_j)$ halts and outputs $v_j\in\mathbb{Q}\cap B_{2^{-r}}(b)$. $U(\sigma_r)\in B_{2^{-r}}(x,mx+b)$ by the definition of $\sigma_r$, so $|p-x|<2^{-r}$. It follows that \[|(u_j,v_j,p)-(m,x,b)|<\sqrt{3(2^{-r})^2}<2^{1-r}\,.\]
				
				It remains to show that $|u_ip+v_j-q|<2^{2-r}$. To do so, we repeatedly apply the triangle inequality and use the fact that $x,m\in[0,1]$:
				\begin{align*}
				|u_ip+v_j-q|&\leq|u_ip+v_j-(mx+b)|+|mx+b-q|\\
				&<|u_jp-mx+v_j-b|+2^{-r}\\
				&\leq|u_jp-mx|+|v_j-b|+2^{-r}\\
				&<|u_jp-u_jx|+|u_jx-mx|+2^{1-r}\\
				&\leq|p-x|+|u_j-m|+2^{1-r}\\
				&<2^{2-r}\,.
				\end{align*}
			\end{proof}
			{
				\renewcommand{\thecla}{\ref{cla:Meas0}}
				\begin{cla}
					For almost every $x\in[0,1]$, $\log(h(x,r))=o(r)$.
				\end{cla}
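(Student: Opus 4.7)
The plan is a first-moment argument combined with Borel--Cantelli. Let $C(x,r)$ denote the total number of indices $i\in\{0,1,\ldots,2^r\}$ that ever pass the candidate test $|u_ip+v_i-q|<2^{2-r}$ during the parallel execution of $M$ on input $\rho\pi_r\sigma_r\eta$. By Claim~\ref{cla:ExistsH}, the correct approximation to $m$ is among them, so $h(x,r)\leq C(x,r)$, and it suffices to show $\log C(x,r)=o(r)$ for almost every $x\in[0,1]$.

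The key observation is that for each fixed $i$, the value $v_i=U(\pi_r,u_i)$, when defined, depends only on $\pi_r$ and $u_i$, not on $x$. Combining the candidate test with $|p-x|<2^{-r}$, $|q-(mx+b)|<2^{-r}$, and $u_i\in[0,1]$, the triangle inequality yields
\[|(u_i-m)x+(v_i-b)|<6\cdot 2^{-r}\,.\]
For $u_i\neq m$, this confines $x$ to an interval $S_i\subseteq[0,1]$ of length at most $12\cdot 2^{-r}/|u_i-m|$. The at most two indices with $|u_i-m|<2^{-r}$ are handled via the trivial bound $|S_i|\leq 1$; summing the remaining slab lengths, grouped by distance $k\cdot 2^{-r}$ from $m$ with $k\geq 1$ on each side, gives
\[\int_0^1 C(x,r)\,dx=\sum_i|S_i|\leq 2+\sum_{k=1}^{2^r}\frac{24\cdot 2^{-r}}{k\cdot 2^{-r}}=O(r)\,.\]

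By Markov's inequality, $\mathrm{Leb}\{x\in[0,1]:C(x,r)\geq 2^{\ve r}\}=O(r\cdot 2^{-\ve r})$ for each rational $\ve>0$, which is summable in $r$. Borel--Cantelli then gives $C(x,r)<2^{\ve r}$ eventually for almost every $x$, and intersecting over a countable sequence $\ve_n\to 0^+$ yields $\log C(x,r)=o(r)$ and hence $\log h(x,r)=o(r)$ almost everywhere. The main obstacle is the triangle-inequality step converting the program-level condition involving $p$ and $q$ into a linear constraint on $x$ alone; this step uses $u_i\in[0,1]$ in an essential way, and its analogue is exactly what breaks down in higher dimensions, as noted at the end of Section~\ref{sec:ksp}.
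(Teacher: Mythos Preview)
Your proposal is correct and follows essentially the same route as the paper: bound $h(x,r)$ by the total candidate count, use the triangle inequality (with $u_i\in[0,1]$) to replace the program-level test on $(p,q)$ by the linear constraint $|(u_i-m)x+(v_i-b)|<c\cdot 2^{-r}$, sum the resulting slab lengths to get $\int_0^1 h(x,r)\,dx=O(r)$ via a harmonic sum, and finish with Markov plus Borel--Cantelli over a countable family of thresholds. The only cosmetic differences are your slightly sharper constant $6$ (the paper rounds up to $2^{3-r}$) and your intersection over rational $\varepsilon\to 0$ in place of the paper's $\varepsilon=1/k$.
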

				\addtocounter{cla}{-1}
			}
			\begin{proof}
				By the countable additivity of Lebesgue measure, it suffices to show for every $k\in\N$ that the set
				\[D_k=\left\{x\in[0,1]:\exists\textrm{ infinitely many }r\in\N\textrm{ such that }\log(h(x,r))>r/k\right\}\]
				has Lebesgue measure $0$. For each $r\in\mathbb{N}$, let $D_{k,r}=\{x:h(x,r)>2^{r/k}\}$. We now estimate $\lambda(D_{k,r})$, the Lebesgue measure of $D_{k,r}$.
				
				For fixed $x$ and $r$, the algorithm run by the Turing machine $M$ entails \[h(x,r)\leq\left|\left\{i:|u_ip+v_i-q|<2^{2-r}\right\}\right|\,.\]
				For fixed $i$,
				\begin{align*}
				|u_ip+v_i-q|&>|u_ix-u_ip|+|u_ip+v_i-q|-2^{-r}\\
				&\geq|u_ix+v_i-q|-2^{-r}\\
				&>|u_ix+v_i-q|+|q-(mx+b)|-2^{1-r}\\
				&\geq|u_ix+v_i-(mx+b)|-2^{1-r}\,.
				\end{align*}
				That is,
				\[\left\{i:|u_ip+v_i-q|<2^{2-r}\right\}\subseteq\left\{i:|u_ix+v_i-(mx+b)|-2^{1-r}<2^{2-r}\right\}\,,\]
				so
				\[h(x,r)\leq\left|\left\{i:|u_ix+v_i-(mx+b)|<2^{3-r}\right\}\right|\,.\]
				
				For fixed $r$ and $i=0,1,\ldots,2^r$, define
				\[C^r_i=\{x\in[0,1]:|u_ix+v_i-(mx+b)|<2^{3-r}\}\,,\]
				For each $i$, if $m=u_i$, then $C^r_i$ is either $[0,1]$ or empty; otherwise, $C^r_i$ is an interval of length
				\[\lambda(C^r_i)\leq\min\left\{\frac{2^{3-r}}{|u_i-m|},1\right\}\,.\]
				Notice that for each $k=0,\ldots,2^r$, there are at most 2 values of $i$ for which
				$2^{-r}k\leq|u_i-m|<2^{-r}(k+1)$,
				so we have
				\begin{align*}
				\int_0^1 h(x,r)dx&\leq\sum\limits_{i=0}^{2^r}\lambda(C^r_i)\\
				&\leq 2+\sum\limits_{k=1}^{2^r}2\frac{2^{3-r}}{2^{-r}k}\\
				&=2+2^4\sum\limits_{k=1}^{2^r}\frac{1}{k}\\
				&<r2^6\,.
				\end{align*}
				Thus, as $h(x,r)>2^{r/k}$ for all $x\in D_{k,r}$, 
				\[\lambda(D_{k,r})<\frac{r2^6}{2^{r/k}}=r2^{6-r/k}\,.\]
				This implies that
				\[\sum_{r=1}^\infty\lambda(D_{k,r})<\infty\,,\]
				so the Borel-Cantelli Lemma tells us that $\lambda(D_k)=0$.
			\end{proof}
			{
				\renewcommand{\thecla}{\ref{cla:pointdim}}
				\begin{cla}
					For every $x\in[0,1]$, if $\log(h(x,r))=o(r)$, then
					\[\liminf_{r\to\infty}\frac{K_r(m,b,x)-K_r(b|m)}{r}\leq\dim(x,mx+b)\,.\]
				\end{cla}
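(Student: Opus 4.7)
The plan is to bound $K_r(m,b,x)$ by feeding economical inputs to the machine $M$ built in the proof of Lemma~\ref{lem:main}, then subtract off $K_r(b|m)$ and pass to the liminf.

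First, for each $r$ I would feed $M$ the four programs $\rho, \pi_r, \sigma_r, \eta_r$ testifying to $K(r)$, $\hat{K}_r(b|m_r)$, $K_r(x,mx+b)$, and $K(h(x,r))$, respectively. By Claim~\ref{cla:ExistsH}, with the minimal choice $h = h(x,r)$, the machine halts with output in $B_{2^{1-r}}(m,b,x)$. Hence universality of $U$ gives
\[
K_{r-1}(m,b,x) \leq K(r) + \hat{K}_r(b|m_r) + K_r(x,mx+b) + K(h(x,r)) + O(1)\,.
\]

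Next I would simplify each term on the right. Claim~\ref{cla:floor} replaces $\hat{K}_r(b|m_r)$ by $K_r(b|m) + o(r)$; the hypothesis $\log h(x,r) = o(r)$ combined with the standard self-delimiting bound $K(n) \leq 2\log n + O(1)$ yields $K(h(x,r)) = o(r)$; and $K(r) = O(\log r) = o(r)$. Substituting gives
\[
K_{r-1}(m,b,x) \leq K_r(b|m) + K_r(x,mx+b) + o(r)\,.
\]
To turn this into a bound on $K_r(m,b,x)$ I would invoke the identity $K_r(y) = K(y\upharpoonright r) + o(r)$ noted in Section~\ref{sec:dpes}, together with the trivial observation that appending one bit per coordinate to a binary truncation only adds $O(1)$ to Kolmogorov complexity. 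This yields $K_r(m,b,x) - K_{r-1}(m,b,x) = o(r)$, so
\[
K_r(m,b,x) - K_r(b|m) \leq K_r(x,mx+b) + o(r)\,.
\]
Dividing by $r$ and taking $\liminf_{r\to\infty}$ then delivers the claim, since $\liminf_r K_r(x,mx+b)/r = \dim(x,mx+b)$ by definition.

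The main obstacle is the precision mismatch: Claim~\ref{cla:ExistsH} only guarantees output at tolerance $2^{1-r}$, one bit coarser than the $2^{-r}$ that $K_r$ demands. This is what forces the last step, which bridges $K_{r-1}$ and $K_r$ via the binary-truncation characterization of $K_r$. A secondary bookkeeping point is ensuring that $h(x,r)$, which varies with $r$, is encoded efficiently enough to be absorbed into $o(r)$; this is exactly where the hypothesis $\log h(x,r) = o(r)$ is used.
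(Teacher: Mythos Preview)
Your proposal is correct and follows essentially the same route as the paper's proof: bound $K_{r-1}(m,b,x)$ by $|\rho\pi_r\sigma_r\eta|$ via Claim~\ref{cla:ExistsH}, replace $\hat{K}_r(b|m_r)$ by $K_r(b|m)+o(r)$ via Claim~\ref{cla:floor}, absorb $K(r)$ and $K(h(x,r))$ into $o(r)$ using the hypothesis, and pass to the liminf. The only cosmetic difference is in bridging the precision gap $K_{r-1}\to K_r$: the paper cites Corollary~3.9 of~\cite{CasLut15} directly, whereas you go through the binary-truncation identity $K_r(y)=K(y{\upharpoonright}r)+o(r)$ from the footnote in Section~\ref{sec:dpes}; both yield $K_r(m,b,x)-K_{r-1}(m,b,x)=o(r)$.
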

				\addtocounter{cla}{-1}
			}
			\begin{proof}
				For fixed $r$, Claim~\ref{cla:ExistsH} gives
				\[K_{r-1}(m,b,x)\leq K(u_i,v_i,p)\leq K_M(u_i,v_i,p)+c_M\,,\]
				where $c_M$ is an optimality constant for $M$. Let $\rho$ and $\eta$ testify to the values of $K(r)$ and $K(h(x,r))$, respectively. Then $K_M(u_i,v_i,p)\leq|\rho\pi_r\sigma_r\eta|$. By our choices of $\rho, \pi_r, \sigma_r$, and $\eta$,
				\begin{align*}
				|\rho\pi_r\sigma_r\eta|&=K(r)+\hat{K}_r(b|m_r)+K_r(x,mx+b)+K(h(x,r))\\
				&=K(r)+K_r(b|m)+K_r(x,mx+b)+K(h(x,r))+o(r)\,,
				\end{align*}
				by Claim~\ref{cla:floor}.
				By Corollary 3.9 of~\cite{CasLut15},
				\begin{align*}
				&\liminf_{r\to\infty}\frac{K_r(m,b,x)-K_r(b|m)}{r}\\
				&=\liminf_{r\to\infty}\frac{K_{r-1}(m,b,x)-K_r(b|m)+o(r)}{r}\\
				&\leq\liminf_{r\to\infty}\frac{K(r)+K_r(x,mx+b)+K(h(x,r))+o(r)}{r}\\
				&\leq\liminf_{r\to\infty}\frac{K_{r}(x,mx+b)}{r}+\limsup_{r\to\infty}\frac{K(r)+K(h(x,r))+o(r)}{r}\\
				&=\dim(x,mx+b)+\limsup_{r\to\infty}\frac{K(h(x,r))}{r}\,.
				\end{align*}
				Applying Observation~\ref{obs:KBoundN}, for some constant $c$,
				\begin{align*}
				\limsup_{r\to\infty}\frac{K(h(x,r))}{r}&\leq\limsup_{r\to\infty}\frac{\log(1+h(x,r))+2\log\log(2+h(x,r))+c}{r}\\
				&=\limsup_{r\to\infty}\frac{\log(h(x,r))+2\log\log(h(x,r))}{r}\,.
				\end{align*}
				If $\log(h(x,r))={o(r)}$, then this is
				\[\limsup_{r\to\infty}\frac{o(r)+2\log(o(r))}{r}=0\,.\]
			\end{proof}
		\subsection{Observations about Kolomogorov Complexity in\\Euclidean Space}
			\begin{obs}\label{obs:BallLattice}
				For every open ball $B\subseteq\R^m$ of radius $2^{-r}$,
				\[B\cap2^{-\left(r+\left\lfloor\frac{1}{2}\log m\right\rfloor+1\right)}\Z^m\neq\emptyset\,.\]
			\end{obs}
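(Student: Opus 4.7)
The plan is to prove the observation by a direct rounding argument: given any open ball $B=B_{2^{-r}}(c)\subseteq\R^m$, I will round the center $c$ coordinate-wise to the nearest multiple of the lattice spacing $\delta=2^{-(r+\lfloor\frac{1}{2}\log m\rfloor+1)}$, obtaining a lattice point $\ell\in\delta\Z^m$, and then verify that $\|c-\ell\|_2<2^{-r}$, so that $\ell\in B$.

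First I would note that coordinate-wise rounding guarantees $|c_i-\ell_i|\leq\delta/2$ for each $i\in\{1,\ldots,m\}$, hence
\[\|c-\ell\|_2\;\leq\;\frac{\sqrt{m}}{2}\,\delta\;=\;\frac{\sqrt{m}}{2}\cdot 2^{-(r+\lfloor\frac{1}{2}\log m\rfloor+1)}\,.\]
It therefore suffices to show the strict inequality $\sqrt{m}<2^{\lfloor\frac{1}{2}\log m\rfloor+2}$, equivalently $\frac{1}{2}\log m<\lfloor\frac{1}{2}\log m\rfloor+2$. This follows from the elementary bound $\lfloor x\rfloor>x-1$ applied to $x=\frac{1}{2}\log m$, which gives $\lfloor\frac{1}{2}\log m\rfloor+2>\frac{1}{2}\log m+1>\frac{1}{2}\log m$.

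The only step that requires any care is choosing the constant inside the exponent so that the $\ell_\infty$-to-$\ell_2$ conversion factor $\sqrt{m}/2$ is absorbed with room to spare, and in particular that the inequality remains strict (so that $\ell$ lies in the \emph{open} ball). The floor in the exponent makes this clean: the ``$+1$'' handles the factor of $2$ from the rounding error being $\delta/2$ rather than $\delta$, and taking a floor (rather than a ceiling) of $\frac{1}{2}\log m$ still leaves more than enough slack once the additive $+1$ is present. I do not expect any real obstacle here; the observation is essentially a restatement of the standard fact that a scaled integer lattice with spacing smaller than $2r/\sqrt{m}$ has a covering radius below $2^{-r}$, and the exponent in the statement is tuned precisely to that bound.
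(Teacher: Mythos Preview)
Your argument is correct. The paper does not actually supply a proof of this observation; it is stated as a bare ``Observation'' and used without further comment in the proofs of Lemmas~\ref{lem:LinSensCondKr} and~\ref{lem:LinSensCondKrs}. The coordinate-wise rounding argument you give is exactly the intended one-line justification: the covering radius of the lattice $\delta\Z^m$ is $\tfrac{\sqrt{m}}{2}\delta$, and the exponent $r+\lfloor\tfrac{1}{2}\log m\rfloor+1$ is chosen precisely so that this quantity is strictly below $2^{-r}$.
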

			
			For $a\in\Z^m$, let $|a|$ denote the distance from the origin to $a$.
			\begin{obs}\label{obs:KBoundN}
				There is a constant $c_0\in\N$ such that, for all $j\in\N$,
				\[K(j)\leq\log(1+j)+2\log\log(2+j)+c_0\,.\]
			\end{obs}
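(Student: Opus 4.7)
The plan is to exhibit an explicit self-delimiting encoding $\sigma : \N \to \{0,1\}^*$ with $|\sigma(j)| \leq \log(1+j) + 2\log\log(2+j) + O(1)$, and then invoke the optimality of the fixed universal prefix machine $U$. Since any self-delimiting code that some prefix Turing machine decodes to $j$ yields an upper bound on $K(j)$ up to an additive constant depending only on the machine, this reduces the observation to a purely combinatorial encoding question.

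For the code, I would use the standard ``doubly self-delimited'' construction. Let $b(j) \in \{0,1\}^*$ denote the ordinary binary representation of $j$, using $b(0) = 0$, so that $|b(j)| \leq \log(1+j) + 1$. For an arbitrary string $w$, write $\bar w = 1^{|w|}\, 0\, w$; the map $w \mapsto \bar w$ has prefix-free range, and $|\bar w| = 2|w|+1$. Then define
\[
\sigma(j) \;=\; \overline{b(|b(j)|)}\; b(j).
\]
A direct length estimate gives $|\overline{b(|b(j)|)}| \leq 2\log\log(2+j) + O(1)$, and concatenating $b(j)$ adds at most $\log(1+j) + 1$ further bits, so
\[
|\sigma(j)| \;\leq\; \log(1+j) + 2\log\log(2+j) + O(1),
\]
as desired.

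It remains to describe a prefix Turing machine $M$ that decodes $\sigma$. On input, $M$ scans the leading $1$'s until it reaches the first $0$, thereby learning a length $\ell$; then reads the next $\ell$ bits to recover the binary string $b(|b(j)|)$ and hence the integer $|b(j)|$; then reads exactly that many further bits as $b(j)$; and finally outputs $j$. Because $\sigma$ is prefix-free, $M$ never needs to read past the intended input, so $M$ is self-delimiting. By optimality of $U$ there is a constant $c_M$ with $K(j) \leq K_M(j) + c_M \leq |\sigma(j)| + c_M$; taking $c_0$ to absorb $c_M$ together with the finite constants hidden in the length estimate above completes the argument.

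No step here is really an obstacle; this is a textbook calculation. The only care required is with the small-$j$ edge cases, where $\log j$ or $\log\log j$ is undefined or vanishes. These are precisely the reason the statement uses $\log(1+j)$ and $\log\log(2+j)$ rather than $\log j$ and $\log\log j$, and the finitely many exceptional values of $j$ are absorbed into the additive constant $c_0$.
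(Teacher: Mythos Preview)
Your argument is correct and is precisely the standard construction the paper has in mind: the paper does not give a proof at all but simply notes that the observation ``holds by a routine technique~\cite{LiVit08},'' i.e., the doubly self-delimiting length-prefix encoding you wrote out.
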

			\begin{obs}\label{obs:KBoundZm}
				There is a constant $c\in\N$ such that, for all $a\in\Z^m$,
				\[K(a)\leq m\log(1+|a|)+\varepsilon(|a|)\,,\]
				where $\varepsilon(t)=c+2\log\log(2+t)$.
			\end{obs}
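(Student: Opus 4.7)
\begin{proofof}{Observation~\ref{obs:KBoundZm}}
The plan is to exhibit an explicit self-delimiting program for $a=(a_1,\ldots,a_m)\in\Z^m$ consisting of a short prefix that announces the bit-length needed per coordinate, followed by a flat, fixed-length listing of the $m$ signed coordinates. This keeps the bulk of the program at exactly $m\log(1+|a|)+O(m)$ bits and confines the self-delimiting overhead to the prefix.

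Concretely, set $N=\lceil\log(1+|a|)\rceil$. Since $|a_i|\leq|a|$ for every coordinate, each $a_i$ is determined by a sign bit together with the $N$-bit binary expansion of $|a_i|$. First I would construct a fixed prefix Turing machine $M$ that, on input $\pi\sigma$, decodes $N=U(\pi)\in\N$ and then reads the next $m(N+1)$ bits of $\sigma$ as $m$ consecutive signed integers of uniform width $N+1$, outputting their tuple. Optimality of $U$ gives a constant $c_M$ with $K(a)\leq K_M(a)+c_M$.

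Next I would bound each piece of the program. Choosing $\pi$ to witness $K(N)$, Observation~\ref{obs:KBoundN} applied to $N$ yields
\[|\pi|\leq\log(1+N)+2\log\log(2+N)+c_0\leq\log\log(2+|a|)+2\log\log\log(2+|a|)+O(1).\]
The data block $\sigma$ has length exactly $m(N+1)\leq m\log(1+|a|)+2m$. Adding these and the machine constant $c_M$ gives
\[K(a)\leq m\log(1+|a|)+\log\log(2+|a|)+2\log\log\log(2+|a|)+O(m),\]
where the hidden constants are absolute. Since $m$ is fixed and $2\log\log\log(2+|a|)\leq\log\log(2+|a|)$ for all sufficiently large $|a|$ (with the finitely many small values absorbed into the constant term), the right-hand side is at most $m\log(1+|a|)+2\log\log(2+|a|)+c$ for a single constant $c\in\N$.

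I do not anticipate a genuine obstacle: the only mild subtlety is making sure the prefix $\pi$ is self-delimiting so that $M$ knows where $\sigma$ begins, which is automatic once $\pi$ is a program for the self-delimiting universal machine $U$. Everything else is routine arithmetic and an application of Observation~\ref{obs:KBoundN}.
\end{proofof}
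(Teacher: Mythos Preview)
Your proof is correct, but it takes a different route from the paper's. The paper argues by enumeration: it lists $\Z^m$ in a computable, norm-nondecreasing order $a_0,a_1,\ldots$, builds a machine that maps an index $j$ to $a_j$, and then bounds $K(a)\leq K(j)+c_M$ using Observation~\ref{obs:KBoundN}. The index $j$ is controlled by counting lattice points in the cube $[-|a|,|a|]^m$, giving $j\leq(2|a|+1)^m-1$; the algebra $\log((2|a|+1)^m)=m\log(2|a|+1)$ then produces the $m\log(1+|a|)$ main term, with the $2\log\log$ term coming straight out of Observation~\ref{obs:KBoundN} applied to~$j$.

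Your approach instead encodes the tuple directly: a self-delimiting header announcing the common bit-width $N=\lceil\log(1+|a|)\rceil$, followed by a flat block of $m$ signed $N$-bit integers. This is a genuinely different decomposition. It is slightly more explicit and in fact yields a marginally sharper error term (your header costs only $\log\log(2+|a|)+2\log\log\log(2+|a|)+O(1)$, which you then relax to $2\log\log(2+|a|)+O(1)$), whereas the paper's argument gets the full $2\log\log$ directly from the single invocation of Observation~\ref{obs:KBoundN}. The paper's version has the minor aesthetic advantage of reducing the $m$-dimensional statement to the $1$-dimensional one in a single stroke; yours has the advantage of being entirely self-contained and making the coordinate-wise structure visible. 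Both are equally valid for the stated bound.
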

			
			Observation \ref{obs:KBoundN} holds by a routine technique \cite{LiVit08}. The proof of Observation \ref{obs:KBoundZm} is also routine:
			
			\begin{proof}
				Fix a computable, nonrepeating enumeration $a_0,a_1,a_2,\ldots$ of $\Z^m$ in which tuples $a_j$ appear in nondecreasing order of $|a_j|$. Let $M$ be a Turing machine such that, for all $\pi\in\{0,1\}^*$, if $U(\pi)\in\N$, then $M(\pi)=a_{U(\pi)}$. Let $c=c_0+c_M+m+\lceil2\log m\rceil+2$, where $c_0$ is as in Observation~\ref{obs:KBoundN} and $c_M$ is an optimality constant for $M$.
				
				To see that $c$ affirms Observation~\ref{obs:KBoundZm}, let $a\in\Z^m$. Let $j\in\N$ be the index for which $a_j=a$, and let $\pi\in\{0,1\}^*$ testify to the value of $K(j)$. Then $M(\pi)=a_{U(\pi)}=a_j=a$, so
				\[K(a)\leq K_M(a)+c_M\leq |\pi|+c_M=K(j)+c_M\,.\]
				It follows by Observation~\ref{obs:KBoundN} that
				\begin{equation}\label{eq:A}
				K(a)\leq\log(1+j)+2\log\log(2+j)+c+c_M\,.
				\end{equation}
				We thus estimate $j$.
				
				Let $B$ be the closed ball of radius $|a|$ centered at the origin in $\Z^m$, and let $Q$ be the solid, axis-parallel $m$-cube circumscribed about $B$. Let $B^\prime=B\cap\Z^M$ and $Q^\prime=Q\cap\Z^m$. Then
				\[j\leq|B^\prime|-1\leq|Q^\prime|-1\leq(2|a|+1)^m-1\,,\]
				so (\ref{eq:A}) tells us that
				\begin{align*}
				K(a)&\leq m\log(2|a|+1)+2\log\log(1+(2|a|+1)^m)+c+c_M\\
				&\leq m\log(2|a|+2)+2\log(m\log(2|a|+4))+c+c_M\,.
				\end{align*}
				Since
				\[m\log(2|a|+2)=m+m\log(1+|a|)\]
				and
				\begin{align*}
				\log(m\log(2|a|+4))&=\log m+\log(1+\log(2+|a|))\\
				&\leq\log m+1+\log\log(2+|a|)\,,
				\end{align*}
				it follows that $K(a)\leq m\log(1+|a|)+\varepsilon(|a|)$.
			\end{proof}
			\begin{obs}\label{obs:addrat}
				For every $r,n\in\N$, $x\in\R^n$, and $q\in\Q^n$, \[K_r(x+q)=K_r(x)+O(1)\,.\]
			\end{obs}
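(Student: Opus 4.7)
The plan is to prove this by the standard trick of transferring an optimal rational approximation of $x$ to an approximation of $x+q$ with (almost) the same Kolmogorov complexity, and then appealing to symmetry for the other direction. Because $q \in \Q^n$ is fixed, we should be able to absorb everything it contributes into the additive $O(1)$ constant.

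First I would establish the $(\leq)$ inequality. Let $p \in \Q^n \cap B_{2^{-r}}(x)$ achieve the minimum in the definition of $K_r(x)$, so $K(p) = K_r(x)$. Then $p+q \in \Q^n$ and $|(p+q)-(x+q)| = |p-x| < 2^{-r}$, so $p+q$ is admissible in the definition of $K_r(x+q)$. It therefore suffices to bound $K(p+q)$ in terms of $K(p)$. To do this I would fix a self-delimiting Turing machine $M_q$ that, on input $\pi$ with $U(\pi) = p' \in \Q^n$, outputs $p' + q$; this machine has $q$ hardwired into its description, so there is a constant $c_q$ (depending only on $q$ and $n$) with $K(p+q) \le K(p) + c_q$. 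Combining, $K_r(x+q) \le K_r(x) + c_q$.

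For the reverse inequality, I would simply apply the already-established direction with $x$ replaced by $x+q$ and $q$ replaced by $-q \in \Q^n$, giving
\[
K_r(x) = K_r((x+q) + (-q)) \le K_r(x+q) + c_{-q}\,.
\]
Taking the maximum of the two machine constants yields the claimed $O(1)$ bound.

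The only subtle point, and the one I would be careful about, is ensuring that the additive constant genuinely does not depend on $r$ or on the (real) point $x$. This is immediate from the construction: the machine $M_q$ and its optimality constant are fixed once $q$ and $n$ are fixed, and the transfer $p \mapsto p+q$ uses nothing about $r$ or $x$ beyond the triangle inequality on norms. So there is no real obstacle; this observation is a one-line computation dressed up as a lemma, and it is used in the proof of Theorem~\ref{thm:Davies} precisely to translate the line segment $L'$ back to $L$ without affecting the dimension of the chosen point.
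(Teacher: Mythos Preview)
Your proof is correct and follows essentially the same approach as the paper: translate an optimal rational approximation of $x$ by $q$, absorb the cost of $q$ into the additive constant, and invoke symmetry with $-q$ for the reverse inequality. The only cosmetic difference is that the paper uses a single ``addition'' machine taking a program for $q$ as part of its input (so the constant is $K(q)+c$), whereas you hardwire $q$ into a dedicated machine $M_q$; both are standard and equivalent.
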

			\begin{proof}
				Let $M$ be a self-delimiting Turing machine such that $M(\pi\kappa)=U(\pi)+U(\kappa)$ whenever $U(\pi),U(\kappa)\in\Q^n$. If $\pi$ is a witness to $K_r(x)$ and $\kappa$ is a witness to $q$, then $M(\pi\kappa)=p+q$ for some $p\in B_{2^{-r}}(x)$, so $M(\pi\kappa)\in B_{2^{-r}}(x+q)$. Thus
				\[K_r(x+q)\leq K_r(x)+K(q)+c\,,\]
				where $c$ is a machine constant for $M$. Since $K(q)$ is constant in $r$, we have $K_r(x+q)\leq K_r(x)+O(1)$. Applying the same argument with $-q$ replacing $q$ completes the proof.
			\end{proof}

			\subsection{Linear Sensitivity of $\hat{K}_r(x|q)$ to $r$}
			\begin{lem}\label{lem:LinSensCondKr}
				There is a constant $c_1\in\N$ such that, for all $x\in\R^m$, $q\in\Q^n$, and $r,\Delta r\in\N$,
				\[\hat{K}_r(x|q)\leq \hat{K}_{r+\Delta r}(x|q)\leq \hat{K}_r(x|q)+m\Delta r+\varepsilon_1(r,\Delta r)\,,\]
				where $\varepsilon_1(r,\Delta r)=2\log(1+\Delta r)+K(r,\Delta r)+c_1$.
			\end{lem}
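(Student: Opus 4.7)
The plan is to establish the two inequalities separately. The left inequality is immediate from the definitions: since $B_{2^{-(r+\Delta r)}}(x)\subseteq B_{2^{-r}}(x)$, the minimum of $K(p|q)$ over the smaller set is at least the minimum over the larger set, so $\hat{K}_r(x|q)\leq\hat{K}_{r+\Delta r}(x|q)$.

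For the right inequality, the idea is to take an optimal rational approximation of $x$ at precision $r$ and refine it by a small integer-lattice correction to obtain an approximation at precision $r+\Delta r$. Concretely, I would fix $p\in\Q^m\cap B_{2^{-r}}(x)$ witnessing $\hat{K}_r(x|q)=K(p|q)$ and set $c_0=\lfloor\tfrac{1}{2}\log m\rfloor+1$. Applying Observation~\ref{obs:BallLattice} to the translated ball $B_{2^{-(r+\Delta r)}}(x)-p$ yields some $a\in\Z^m$ with
\[p':=p+a\cdot2^{-(r+\Delta r+c_0)}\in B_{2^{-(r+\Delta r)}}(x)\,.\]
From $|a|\cdot2^{-(r+\Delta r+c_0)}=|p'-p|\leq|p'-x|+|x-p|<2^{-(r+\Delta r)}+2^{-r}$, I get $|a|<2^{c_0+\Delta r+1}$, hence $\log(1+|a|)\leq\Delta r+O(1)$.

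Next I would design a self-delimiting oracle machine $M$ that, on input $\pi\alpha\tau$ with oracle $q$, interprets $\pi$ as a $U^{(\cdot|q)}$-program for some $p\in\Q^m$, $\alpha$ as a $U$-program for some $a\in\Z^m$, and $\tau$ as a $U$-program for a pair $(r,\Delta r)\in\N^2$, and outputs $p+a\cdot2^{-(r+\Delta r+c_0)}$. Choosing $\pi$, $\alpha$, $\tau$ to witness $K(p|q)$, $K(a)$, and $K(r,\Delta r)$ respectively, the machine outputs $p'\in B_{2^{-(r+\Delta r)}}(x)$, giving
\[\hat{K}_{r+\Delta r}(x|q)\leq K(p'|q)\leq \hat{K}_r(x|q)+K(a)+K(r,\Delta r)+c_M\,.\]
Observation~\ref{obs:KBoundZm} then bounds $K(a)\leq m\log(1+|a|)+\varepsilon(|a|)\leq m\Delta r+2\log(1+\Delta r)+O(1)$, where the $m\cdot(c_0+1)$ and $\varepsilon$-tail constants are absorbed into the final constant $c_1$. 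Combining these estimates gives the claimed bound.

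The main obstacle is bookkeeping rather than conceptual: one must verify that the lattice step size $2^{-(r+\Delta r+c_0)}$ is fine enough for Observation~\ref{obs:BallLattice} to apply and simultaneously coarse enough that $\log(1+|a|)=\Delta r+O(1)$, so that Observation~\ref{obs:KBoundZm} delivers exactly the advertised $m\Delta r$ leading term with only a $2\log(1+\Delta r)$ second-order overhead. The $K(r,\Delta r)$ term is unavoidable because the machine $M$ must locate the correction within the correct lattice scale.
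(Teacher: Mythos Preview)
Your proposal is correct and follows essentially the same approach as the paper: both arguments take a witness $p$ for $\hat{K}_r(x|q)$, apply Observation~\ref{obs:BallLattice} to the translated ball $B_{2^{-(r+\Delta r)}}(x)-p$ at scale $2^{-(r+\Delta r+\lfloor\frac{1}{2}\log m\rfloor+1)}$ to obtain an integer correction $a$, build a machine that outputs $p+2^{-(r+\Delta r+\lfloor\frac{1}{2}\log m\rfloor+1)}a$ from programs for $p$, $a$, and $(r,\Delta r)$, and then bound $K(a)$ via Observation~\ref{obs:KBoundZm} using $|a|<2^{\lfloor\frac{1}{2}\log m\rfloor+1}(1+2^{\Delta r})$. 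The only differences are cosmetic (input ordering, naming of the shift constant $c_0$).
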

			\begin{proof}
				Let $M$ be a Turing machine such that, for all $\pi_1,\pi_2,\pi_3\in\{0,1\}^*$ and $q\in\Q^n$, if $U(\pi_1,q)=p\in\Q^m$, $U(\pi_2)=(r,\Delta r)\in\N^2$, and $U(\pi_3)=a\in\Z^m$, then $M(\pi_1\pi_2\pi_3,q)=p+2^{-r^*}a$, where $r^*=r+\Delta r+\left\lfloor\tfrac{1}{2}\log m\right\rfloor+1$. Let $c_1=c+c_M+3m+m\left\lfloor\tfrac{1}{2}\log m\right\rfloor+\left\lceil 2\log(3+\left\lfloor\tfrac{1}{2}\log m\right\rfloor)\right\rceil$, where $c$ is the constant from Observation \ref{obs:KBoundZm} and $c_M$ is an optimality constant for $M$.
				
				To see that $c_1$ affirms the lemma, let $x$, $q$, $r$, and $\Delta r$ be as given. The first inequality holds trivially. To see that the second inequality holds, let $\pi_1,\pi_2\in\{0,1\}^*$ testify to the values of $\hat{K}_r(x|q)$ and $K(r,\Delta r)$, respectively. Let $B=B_{2^{-r}}(x)$, $B^\prime=B_{2^{-(r+\Delta r)}}(x)$ and $p=U(\pi_1,q)$, noting that $p\in\Q^m\cap B$. Applying Observation \ref{obs:BallLattice} to the ball $B^\prime-p$ tells us that
				\[(B^\prime-p)\cap2^{-r^*}\Z^m\neq\emptyset\,,\]
				i.e., that
				\[B^\prime\cap(p+2^{-r^*}\Z^m)\neq\emptyset\,.\]
				So fix a point $p^\prime\in B^\prime\cap(p+2^{-r^*}\Z^m)$, say, $p^\prime=p+2^{-r^*}a$, where $a\in\Z^m$, and let $\pi_3\in\{0,1\}^*$ testify to the value of $K(a)$. Then
				\[M(\pi_1\pi_2\pi_3,q)=p^\prime\in\Q\cap B^\prime\,,\]
				so
				\begin{align*}
				\hat{K}_{r+\Delta r}(x|q)&\leq K(p^\prime|q)\\
				&\leq \hat{K}_M(p^\prime|q)+c_M\\
				&\leq |\pi_1\pi_2\pi_3|+c_M\,.
				\end{align*}
				By our choice of $\pi_1$, $\pi_2$, and $\pi_3$, this implies that
				\begin{equation}\label{eq:B}
				\hat{K}_{r+\Delta r}(x|q)\leq \hat{K}_r(x|q)+K(r,\Delta r)+K(a)+c_M\,.
				\end{equation}
				We thus estimate $K(a)$.
				
				Since
				\begin{align*}
				|a|&= 2^{r^*}|p^\prime-p|\\
				&\leq 2^{r^*}(|p^\prime-x|+|p-x|)\\
				&< 2^{r^*}\left(2^{-(r+\Delta r)}+2^{-r}\right)\\
				&= 2^{1+\left\lfloor\frac{1}{2}\log m\right\rfloor}\left(1+2^{\Delta r}\right)\,,
				\end{align*}
				Observation \ref{obs:KBoundZm} tells us that
				\begin{align*}
				K(a)&\leq m\log\left(1+2^{\left\lfloor\frac{1}{2}\log m\right\rfloor}\left(1+2^{\Delta r}\right)\right)+\varepsilon(|a|)\\
				&\leq m\log\left(2^{\Delta r+3+\left\lfloor\frac{1}{2}\log m\right\rfloor}\right)+\varepsilon(|a|)\,,
				\end{align*}
				i.e., that
				\begin{equation}\label{eq:C}
				K(a)\leq m\Delta r+3m+m\left\lfloor\tfrac{1}{2}\log m\right\rfloor+\varepsilon(|a|)\,,
				\end{equation}
				where
				\begin{align*}
				\varepsilon(|a|)&\leq c+2\log\log\left(2+2^{1+\left\lfloor\frac{1}{2}\log m\right\rfloor}\left(1+2^{\Delta r}\right)\right)\\
				&\leq c+2\log\log\left(2^{\Delta r+3+\left\lfloor\frac{1}{2}\log m\right\rfloor}\right)\\
				&= c+2\log\left(\Delta r+\left\lfloor\tfrac{1}{2}\log m\right\rfloor+3\right)\\
				&\leq c+2\log\left(\left(1+\Delta r\right)\left(3+\left\lfloor\tfrac{1}{2}\log m\right\rfloor\right)\right)\\
				&= c+2\log(1+\Delta r)+2\log\left(3+\left\lfloor\tfrac{1}{2}\log m\right\rfloor\right)\,.
				\end{align*}
				It follows by (\ref{eq:B}) and (\ref{eq:C}) that
				\[\hat{K}_{r+\Delta r}(x|q)\leq \hat{K}_r(x|q)+m\Delta r+\varepsilon_1(r,\Delta r)\,.\]
			\end{proof}
			\subsection{Linear Sensitivity of $K_{r,s}(x|y)$ to $s$}
			\begin{lem}\label{lem:LinSensCondKrs}
				There is a constant $c_2\in\N$ such that, for all $x\in\R^m$, $y\in\R^n$, and $r,s,\Delta s\in\N$,
				\[K_{r,s}(x|y)\geq K_{r,s+\Delta s}(x|y)\geq K_{r,s}(x|y)-n\Delta s-\varepsilon_2(s,\Delta s)\,,\]
				where $\varepsilon_2(s,\Delta s)=2\log(1+\Delta s)+K(s,\Delta s)+c_2$.
			\end{lem}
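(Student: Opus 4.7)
The first inequality is immediate from set inclusion: $B_{2^{-(s+\Delta s)}}(y)\subseteq B_{2^{-s}}(y)$, so the outer maximum defining $K_{r,s+\Delta s}(x|y)$ ranges over a subfamily of the rationals used in $K_{r,s}(x|y)$. The substantive content is the second inequality, and my plan is a close transposition of the proof of Lemma~\ref{lem:LinSensCondKr}: where that lemma refined a rational approximation of $x$ by a small lattice offset while leaving the conditioning point fixed, I would here refine the conditioning point $q$ by a small lattice offset while leaving the approximation of $x$ intact.

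Concretely, I would fix $q\in\Q^n\cap B_{2^{-s}}(y)$ attaining the outer maximum, so that $\hat{K}_r(x|q)=K_{r,s}(x|y)$, and set $s^*=s+\Delta s+\lfloor\tfrac{1}{2}\log n\rfloor+1$. Applying Observation~\ref{obs:BallLattice} to the translated ball $B_{2^{-(s+\Delta s)}}(y)-q$ produces $a\in\Z^n$ with $q^*:=q+2^{-s^*}a\in\Q^n\cap B_{2^{-(s+\Delta s)}}(y)$. Since $q^*$ is a legal argument in the definition of $K_{r,s+\Delta s}(x|y)$, we have $\hat{K}_r(x|q^*)\leq K_{r,s+\Delta s}(x|y)$, and I can fix $p^*\in\Q^m\cap B_{2^{-r}}(x)$ witnessing $\hat{K}_r(x|q^*)$.

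The key device is a Turing machine $M$ which, on input $\pi_1\pi_2\pi_3$ with conditioning string $q$, decodes $U(\pi_1)=(s,\Delta s)$ to compute $s^*$, decodes $U(\pi_2)=a$ to form $q^*=q+2^{-s^*}a$, and returns $U(\pi_3,q^*)$. Choosing $\pi_1$, $\pi_2$, $\pi_3$ to witness $K(s,\Delta s)$, $K(a)$, and $K(p^*|q^*)$ respectively, the machine outputs $p^*$, so optimality of $U$ gives
\[\hat{K}_r(x|q)\leq K(p^*|q)\leq K(s,\Delta s)+K(a)+K(p^*|q^*)+c_M\,.\]
From $|q^*-q|<2^{-s}+2^{-(s+\Delta s)}<2^{1-s}$ one obtains $|a|=2^{s^*}|q^*-q|<2^{\Delta s+\lfloor\tfrac{1}{2}\log n\rfloor+2}$, and Observation~\ref{obs:KBoundZm} then yields $K(a)\leq n\Delta s+2\log(1+\Delta s)+O(1)$ once the $n$-dependent and $\log\log$ contributions are absorbed into $c_2$. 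Substituting this together with $K(p^*|q^*)\leq K_{r,s+\Delta s}(x|y)$ delivers the claimed inequality.

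I do not expect any serious obstacle. Observation~\ref{obs:BallLattice} applies equally whether the rational being refined parametrizes the target $x$ or the conditioning point, and the bookkeeping with lattice norms and optimality constants is structurally identical to the one already carried out in Lemma~\ref{lem:LinSensCondKr}.
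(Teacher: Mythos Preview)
Your proposal is correct and essentially identical to the paper's own argument: the paper likewise builds a machine that decodes $(s,\Delta s)$ and a lattice offset $a\in\Z^n$, forms $q'=q+2^{-s^*}a\in B_{2^{-(s+\Delta s)}}(y)$ via Observation~\ref{obs:BallLattice}, runs a witness for $\hat K_r(x|q')$, and bounds $K(a)$ through $|a|=2^{s^*}|q'-q|$ exactly as you do. The only cosmetic difference is that the paper proves the bound for an arbitrary $q\in\Q^n\cap B_{2^{-s}}(y)$ rather than just the maximizer, which is equivalent.
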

			\begin{proof}
				Let $M$ be a Turing machine such that, for all $\pi_1,\pi_2,\pi_3\in\{0,1\}^*$ and $q\in\Q^n$, if $U(\pi_1)=(s,\Delta s)\in\N^2$ and $U(\pi_2)=a\in\Z^m$, then $M(\pi_1\pi_2\pi_3,q)=U(\pi_3,q+2^{-s^*}a)$, where $s^*=s+\Delta s+\left\lceil\tfrac{1}{2}\log n\right\rceil$. Let $c_2=c+c_M+3n+n\left\lfloor\tfrac{1}{2}\log n\right\rfloor+2\left\lceil2\log(3+\left\lfloor\tfrac{1}{2}\log n\right\rfloor)\right\rceil$, where $c$ is the constant from Observation \ref{obs:KBoundZm} and $c_M$ is an optimality constant for $M$.
				
				To see that $c_2$ affirms the lemma, let $x$, $y$, $r$, $s$, and $\Delta s$ be as given The first inequality holds trivially. To see that the second inequality holds, let $B=B_{2^-s}(y)$, $B^\prime=B_{2^{-(s+\Delta s)}}(y)$, and $q\in\Q^n\cap B$. It suffices to prove that
				\begin{equation}\label{eq:D}
				\hat{K}_r(x|q)\leq K_{r,s+\Delta s}(x|y)+n\Delta s+\varepsilon_2(s,\Delta s)\,.
				\end{equation}
				
				Let $\pi_1\in\{0,1\}^*$ testify to the value of $K(s,\Delta s)$. Applying Observation \ref{obs:BallLattice} to the ball $B^\prime-q$ tells us that
				\[(B^\prime-q)\cap2^{-s^*}\Z^n\neq\emptyset\,,\]
				i.e., that
				\[B^\prime\cap(q+2^{-s^*}\Z^n)\neq\emptyset\,.\]
				So fix a point $q^\prime\in B^\prime\cap(q+2^{-s^*}\Z^n)$, say, $q^\prime=q+2^{-s^*}a$, where $a\in\Z^n$. Note that
				\begin{equation}\label{eq:E}
				\hat{K}_r(x|q^\prime)\leq K_{r,s+\Delta s}(x|y).
				\end{equation}
				Let $\pi_2,\pi_3\in\{0,1\}^*$ testify to the values of $K(a)$ and $\hat{K}_r(x|q^\prime)$, respectively, noting that $U(\pi_3,q^\prime)=p$ for some $p\in\Q^m\cap B_{2^{-r}}(x)$. Then
				\[M(\pi_1\pi_2\pi_3,q)=U(\pi_3,q^\prime)=p\in\Q^m\cap B_{2^{-r}}(x)\,,\]
				so
				\begin{align*}
				\hat{K}_r(x|q)&\leq K(p|q)\\
				&\leq K_M(p|q)+c_M\\
				&\leq |\pi_1\pi_2\pi_3|+c_M\,.
				\end{align*}
				By our choice of $\pi_1$, $\pi_2$, and $\pi_3$, and by (\ref{eq:E}), this implies that
				\begin{equation}\label{eq:F}
				\hat{K}_r(x|q)\leq K_{r,s+\Delta s}(x|y)+K(a)+K(s,\Delta s)+c_M\,.
				\end{equation}
				We thus estimate K(a).
				
				Since
				\begin{align*}
				|a|&= 2^{s^*}|q^\prime-q|\\
				&\leq 2^{s^*}(|q^\prime-y|+|q-y|)\\
				&< 2^{s^*}(s^{-(s+\Delta s)}+2^{-s})\\
				&= 2^{1+\left\lfloor\frac{1}{2}\log n\right\rfloor}(1+2^{\Delta s})\,,
				\end{align*}
				Observation \ref{obs:KBoundZm} tells us that
				\begin{align*}
				K(a)&\leq n\log(1+2^{1+\left\lfloor\frac{1}{2}\log n\right\rfloor}(1+2^{\Delta s}))+\varepsilon(|a|)\\
				&\leq n\log(2^{\Delta s+3+\left\lfloor\frac{1}{2}\log n\right\rfloor})+\varepsilon(|a|)\,,
				\end{align*}
				i.e., that
				\begin{equation}\label{eq:G}
				K(a)\leq n\Delta s+3n+n\left\lfloor\tfrac{1}{2}\log n\right\rfloor+\varepsilon(|a|),
				\end{equation}
				where
				\begin{align*}
				\varepsilon(|a|)&\leq c+2\log\log(2+2^{1+\left\lfloor\frac{1}{2}\log n\right\rfloor}(1+2^{\Delta s}))\\
				&\leq c+2\log\log(2^{\Delta s+3+\left\lfloor\frac{1}{2}\log n\right\rfloor})\\
				&= c+2\log(\Delta s+3+\left\lfloor\tfrac{1}{2}\log n\right\rfloor)\\
				&\leq c+2\log\big((1+\Delta s)(3+\left\lfloor\tfrac{1}{2}\log n\right\rfloor)\big)\\
				&= c+2\log(1+\Delta s)+2\log(3+\left\lfloor\tfrac{1}{2}\log n\right\rfloor)\,.
				\end{align*}
				It follows by (\ref{eq:F}) and (\ref{eq:G}) that (\ref{eq:D}) holds.
			\end{proof}
		\subsection{Proof of Theorem~\ref{thm:CondDimRobust}}
			{
				\renewcommand{\thethm}{\ref{thm:CondDimRobust}}
				\begin{thm}
					Let $s:\N\to\N$. If $|s(r)-r|=o(r)$, then, for all $x\in\R^m$ and $y\in\R^n$,
					\[\dim(x|y)=\liminf_{r\to\infty}\frac{K_{r,s(r)}(x|y)}{r}\,,\]
					and
					\[\Dim(x|y)=\limsup_{r\to\infty}\frac{K_{r,s(r)}(x|y)}{r}\,.\]
				\end{thm}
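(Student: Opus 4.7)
The plan is to combine the monotonicity of $K_{r,s}(x|y)$ in $s$ with the linear sensitivity bound of Lemma~\ref{lem:LinSensCondKrs} to show that replacing the second precision $r$ by $s(r)$ alters $K_{r,s(r)}(x|y)$ by only $o(r)$, so the limits inferior and superior defining $\dim(x|y)$ and $\Dim(x|y)$ are preserved.

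First I would observe that $K_{r,s}(x|y)$ is non-increasing in $s$, since as $s$ grows the ball $B_{2^{-s}}(y)$ shrinks and the maximum in definition~(\ref{eq:Krsxy}) is taken over a smaller set of rationals. Set $\Delta r := |s(r) - r|$, which is $o(r)$ by hypothesis. Applying Lemma~\ref{lem:LinSensCondKrs} with $(s,\Delta s) = (r,\Delta r)$ in the case $s(r) \geq r$, and with $(s,\Delta s) = (s(r),\Delta r)$ in the case $s(r) < r$, yields in both cases the two-sided estimate
\[
\bigl|K_{r,s(r)}(x|y) - K_r(x|y)\bigr| \;\leq\; n\,\Delta r \;+\; \varepsilon_2\bigl(\min\{r, s(r)\},\, \Delta r\bigr).
\]

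Next I would verify that the right-hand side is $o(r)$. The first term $n\Delta r$ is $o(r)$ by hypothesis. For the error term $\varepsilon_2(s,\Delta s) = 2\log(1+\Delta s) + K(s,\Delta s) + c_2$, note that $\min\{r,s(r)\} \leq r$ and $\Delta r \leq r$ for all sufficiently large $r$. Applying Observation~\ref{obs:KBoundN} coordinatewise gives $K(s,\Delta s) = O(\log s + \log \Delta s)$, so $\varepsilon_2(\min\{r,s(r)\},\Delta r) = O(\log r) = o(r)$. Dividing the two-sided estimate by $r$ and letting $r \to \infty$, the difference between $K_{r,s(r)}(x|y)/r$ and $K_r(x|y)/r$ tends to $0$, so their liminf and limsup agree with $\dim(x|y)$ and $\Dim(x|y)$ respectively by definition~(\ref{eq:conddimDim}).

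Since Lemma~\ref{lem:LinSensCondKrs} is tailored to exactly this situation, there is no substantive obstacle. The only point requiring minor care is confirming that the Kolmogorov-complexity overhead $K(s,\Delta s)$ in $\varepsilon_2$ is truly logarithmic in $r$, which is immediate from the standard bound on the complexity of natural numbers; the hypothesis $|s(r)-r| = o(r)$ is used only once, to ensure $n\Delta r = o(r)$.
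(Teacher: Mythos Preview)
Your proposal is correct and follows essentially the same approach as the paper: both arguments use Lemma~\ref{lem:LinSensCondKrs} together with the monotonicity of $K_{r,s}(x|y)$ in $s$ to show $|K_{r,s(r)}(x|y)-K_{r,r}(x|y)|=o(r)$. The only cosmetic difference is that the paper introduces $s^-(r)=\min\{r,s(r)\}$ and $s^+(r)=\max\{r,s(r)\}$ and sandwiches both $K_{r,r}$ and $K_{r,s(r)}$ between $K_{r,s^-(r)}$ and $K_{r,s^+(r)}$, thereby handling both cases at once, whereas you do the explicit case split on the sign of $s(r)-r$.
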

				\addtocounter{thm}{-1}
			}
			\begin{proof}
				Assume the hypothesis. Define $s^-,s^+:\N\to\N$ by
				\[s^-(r)=\min\{r,s(r)\},\;s^+(r)=\max\{r,s(r)\}\,.\]
				Lemma \ref{lem:LinSensCondKrs} tells us that, for all $x\in\R^m$ and $y\in\R^n$,
				\begin{align*}
				K_{r,s^-(r)}(x|y)&\geq K_{r,r}(x|y)\\
				&\geq K_{r,s^+(r)}(x|y)\\
				&\geq K_{r,s^-(r)}(x|y)-O(s^+(r)-s^-(r))-o(r)\\
				&= K_{r,s^-(r)}(x|y)-O(|s(r)-r|)-o(r)\\
				&= K_{r,s^-(r)}(x|y)-o(r)\,.
				\end{align*}
				Since
				\[K_{r,s^-(r)}(x|y)\geq K_{r,s(r)}(x|y)\geq K_{r,s^+(r)}(x|y)\,,\]
				it follows that
				\[\big|K_{r,s(r)}(x|y)-K_{r,r}(x|y)\big|=o(r)\,.\]
				The theorem follows immediately.
			\end{proof}
			\subsection{Proof of Lemma~\ref{lem:MutualInfo}}
			{
				\renewcommand{\thethm}{\ref{lem:MutualInfo}}
				\begin{lem}
					For all $x\in\R^m$, $y\in\R^n$, and $r\in\N$,
					\[I_r(x:y)=K_r(x)-K_r(x|y)+o(r)\,.\]
				\end{lem}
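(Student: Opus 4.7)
The plan is to establish the two inequalities $I_r(x:y)\leq K_r(x)-K_r(x|y)+o(r)$ and $I_r(x:y)\geq K_r(x)-K_r(x|y)+o(r)$ separately. I will rely throughout on the classical symmetry-of-information identity $I(p:q)=K(p)-K(p|q)+O(\log K(p,q))$ for rationals, together with the fact that $p\in\Q^m\cap B_{2^{-r}}(x)$ and $q\in\Q^n\cap B_{2^{-r}}(y)$ satisfy $K(p),K(q)=O(r)$, so that the logarithmic overhead is $o(r)$.

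For the upper bound, I would pick $p^*\in\Q^m\cap B_{2^{-r}}(x)$ witnessing $K(p^*)=K_r(x)$ and $q^*\in\Q^n\cap B_{2^{-r}}(y)$ attaining the maximum in the definition $K_r(x|y)=\max_q\hat{K}_r(x|q)$. Since $p^*$ is one of the candidates in the minimum that defines $\hat{K}_r(x|q^*)$, we get $K(p^*\,|\,q^*)\geq\hat{K}_r(x|q^*)=K_r(x|y)$. The classical identity then yields $I(p^*:q^*)\leq K_r(x)-K_r(x|y)+o(r)$. Because $(p^*,q^*)$ is admissible in the minimum defining $I_r(x:y)$, the upper bound follows.

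For the lower bound, I plan to prove directly one half of the chain rule, $K_r(x,y)\leq K_r(y)+K_r(x|y)+o(r)$, and then combine it with Theorem~4.10 of~\cite{CasLut15}, which states $I_r(x:y)=K_r(x)+K_r(y)-K_r(x,y)+o(r)$. To prove the chain-rule half, choose $q'\in\Q^n\cap B_{2^{-r}}(y)$ witnessing $K(q')=K_r(y)$ and a program $\pi$ witnessing $\hat{K}_r(x|q')$, so that $U(\pi,q')=p'$ for some $p'\in\Q^m\cap B_{2^{-r}}(x)$. A prefix machine that decodes $q'$ from a self-delimiting prefix, runs $U(\cdot,q')$ on the remainder, and emits $(p',q')$ shows that $K(p',q')\leq K_r(y)+\hat{K}_r(x|q')+O(1)\leq K_r(y)+K_r(x|y)+O(1)$, where the last step uses $\hat{K}_r(x|q')\leq\max_q\hat{K}_r(x|q)=K_r(x|y)$. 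Since $(p',q')$ lies in $B_{\sqrt 2\cdot 2^{-r}}((x,y))\subseteq B_{2^{-(r-1)}}((x,y))$, it is admissible for $K_{r-1}(x,y)$; the linear precision sensitivity of $K_r$ (Corollary~3.9 of~\cite{CasLut15}, i.e., the unconditional analog of Lemma~\ref{lem:LinSensCondKr}) then yields $K_r(x,y)\leq K_{r-1}(x,y)+o(r)\leq K_r(y)+K_r(x|y)+o(r)$. Combining with Theorem~4.10 produces the lower bound.

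The main obstacle is bookkeeping rather than conceptual: each $O(\log)$ from the classical information identity, each $O(1)$ from a constant precision shift between $r$ and $r-1$, and the one-sided slack between the specific $\hat{K}_r(x|q')$ chosen and the maximum $K_r(x|y)=\max_q\hat{K}_r(x|q)$ must all be shown to absorb into $o(r)$ as $r\to\infty$. Since every complexity involved is bounded by $O(r)$ and every precision shift is constant, these overheads do vanish asymptotically, and the two inequalities together yield the lemma.
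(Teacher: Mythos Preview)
Your lower-bound direction is correct, and in fact takes a cleaner route than the paper for that inequality: you prove one half of the chain rule directly and combine it with Theorem~4.10 of \cite{CasLut15}, whereas the paper manipulates $I(p_0:q_0)$ using Lemma~4.5 and Corollary~4.4 of \cite{CasLut15}.

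The upper-bound direction, however, has a genuine gap. Your blanket claim that every $p\in\Q^m\cap B_{2^{-r}}(x)$ and $q\in\Q^n\cap B_{2^{-r}}(y)$ satisfy $K(p),K(q)=O(r)$ is simply false: any open ball contains rationals of arbitrarily high Kolmogorov complexity. This matters for the specific $q^*$ you pick, namely a maximizer of $q\mapsto\hat{K}_r(x|q)$ over the $y$-ball; nothing in your argument bounds $K(q^*)$. If $K(q^*)$ is large---and nothing prevents it from being, say, doubly exponential in $r$---then the $O(\log K(p^*,q^*))$ overhead from symmetry of information is not $o(r)$, and the step $I(p^*:q^*)\le K_r(x)-K_r(x|y)+o(r)$ fails. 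The paper handles this direction by working throughout with the $K$-minimizer $q_0$ of the $y$-ball (so $K(q_0)=K_r(y)=O(r)$ automatically) and then invoking Lemma~4.2 and Observation~3.7 of \cite{CasLut15}, which in effect say that $q_0$ already maximizes $K(p|\cdot)$ over the ball up to $o(r)$. That fact---that the least-complex approximation of $y$ is the near-least-helpful conditioner---is a substantive lemma, not bookkeeping, and it is exactly the ingredient your upper bound is missing.
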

				\addtocounter{thm}{-1}
			}
			\begin{proof}
				Let $B_x=B_{2^{-r}}(x)\cap\Q^m$ and $B_y= B_{2^{-r}}(y)\cap\Q^n$. Let $p_0$ and $q_0$ be $K$-minimizers for $B_x$ and $B_y$, respectively, such that
				\begin{equation}\label{eq:MI1}
				I_r(x:y)=I(p_0:q_0)+o(r)\,.
				\end{equation}
				These exist by Theorem 4.6 of~\cite{CasLut15}. Then
				\begin{align*}
				K_r(x)-K_r(x|y)&=\min_{p\in B_x}K(p)-\max_{q\in B_y}\min_{p\in B_x}K(p|q)\\
				&\geq\min_{p\in B_x}K(p)-\min_{p\in B_x}\max_{q\in B_y}K(p|q)\\
				&=\min_{p\in B_x}K(p)-\min_{p\in B_x}K(p|q_0)+o(r)\,,
				\intertext{by Lemma 4.2 and Observation 3.7 of~\cite{CasLut15}.}
				&=K(p_0)-\min_{p\in B_x}K(p|q_0)+o(r)\\
				&\geq K(p_0)-K(p_0|q_0)+o(r)\\
				&=I(p_0:q_0)+o(r)\\
				&=I_r(x:y)+o(r)\,.
				\end{align*}
				
				For the other direction, let $p_1\in B_x$ be such that
				\[K(p_1|q_0)=\min_{p\in B_x}K(p|q_0)\,.\]
				By Lemma 4.5 of~\cite{CasLut15},
				\begin{align*}
				I(p_0:q_0)&\geq K(p_1)-K(p_1|p_0,K(p_0))-K(p_1|q_0,K(q_0))+o(r)\\
				&\geq K(p_1)-K(p_1|p_0,K(p_0))-K(p_1|q_0)+o(r)\numberthis\label{eq:MI2}\,.
				\end{align*}
				Now
				\begin{align*}
				K(p_0)+K(p_1|p_0,K(p_0))+o(r)&=K(p_0,p_1)\\
				&=K(p_1)+K(p_0|p_1,K(p_1))+o(r)\\
				&\leq K(p_1)+K(p_0|p_1)+o(r)\\
				&=K(p_1)+o(r)\,,
				\end{align*}
				by Corollary 4.4 of~\cite{CasLut15}. So
				\[K(p_1)-K(p_1|p_0,K(p_0))\geq K(p_o)+o(r)\,,\]
				thus by (\ref{eq:MI2}),
				\begin{align*}
				I(p_0:q_0)&\geq K(p_0)-K(p_1|q_0)+o(r)\\
				&=K(p_0)-\min_{p\in B_x}K(p|q_0)+o(r)\\
				&=K_r(x)-\min_{p\in B_x}K(p|q_0)+o(r)\\
				&\geq K_r(x)-\max_{q\in B_y}\min_{p\in B_x}K(p|q)+o(r)\\
				&=K_r(x)-K_r(x|y)+o(r)\,.
				\end{align*}
				Then by (\ref{eq:MI1}), $I_r(x:y)\geq K_r(x)-K_r(x|y)+o(r)$, so equality holds.
			\end{proof}
\end{document}